\numberwithin{equation}{section}
\newtheorem{theorem}{Theorem}[section]
\newtheorem{lemma}[theorem]{Lemma}
\newtheorem{corollary}[theorem]{Corollary}
\newtheorem{proposition}[theorem]{Proposition}
\theoremstyle{remark}
\newtheorem{remark}{Remark}
\renewcommand{\le}{\leqslant} 
\renewcommand{\ge}{\geqslant} 
\renewcommand{\leq}{\leqslant} 
\renewcommand{\geq}{\geqslant}
\newcommand{\ind}{\mathds{1}}
\newcommand{\eps}{\varepsilon}
\newcommand{\abs}[1]{\left\vert#1\right\vert}
\newcommand{\ie}{\emph{i.e.,}}
\newcommand{\equald}{\stackrel{\mathrm{d}}{=}}
\def\qed{ \hfill $\blacksquare$}
  \let\gc=\gamma  
     \let\gl=\lambda      \let\go=\omega
\newcommand{\cF}{\mathcal{F}}
\newcommand{\bN}{\mathbb{N}}
\newcommand{\bZ}{\mathbb{Z}}        
\DeclareMathOperator{\E}{\mathds{E}}
\DeclareMathOperator{\pr}{\mathds{P}}
\DeclareMathOperator{\geom}{Geom}
\DeclareMathOperator{\ber}{Ber}
\title{The Time to Consensus in a Blockchain: \\
\large{Insights into Bitcoin's ``6 Blocks Rule''}}
\author{Partha S.~Dey \and Aditya S.~Gopalan \and Vijay G.~Subramanian}
\date{}
\begin{document}
\maketitle

\begin{abstract}
    We investigate the time to consensus in Nakamoto blockchains. Specifically, we consider two competing growth processes, labeled \emph{honest} and \emph{adversarial}, and determine the time after which the honest process permananetly exceeds the adversarial process.
    This is done via queueing techniques.
    The predominant difficulty is that the honest growth process is subject to \emph{random delays}.
    In a stylized Bitcoin model, we compute the Laplace transform for the time to consensus and verify it via simulation.
\end{abstract}

\section{Introduction}
\label{sec:blockchain-prelim}

There has been recent interest in applying blockchains to supply chains~\cite{keskin2024blockchain, iyengar2023economics, iyengar2024blockchain, cui2024supply}, due to the fact that they provide \emph{verifiable consensus} -- that is, it can be determined (at least asymptotically) from the blockchain data structure that all parties agree on a certain subset of the blocks.  For applications such as grocery supply chains, we also need to require that consensus is timely.
However, unlike the widely-studied Bitcoin system, where new blocks arrive significantly more slowly than blocks propagate in the network~\cite{decker2013information}, blocks arrive more quickly in the grocery supply chain setting~\cite{keskin2024blockchain}. This means that the time to consensus is a non-trivial quantity, which we focus on in this paper.

\subsection{The Blockchain Protocol}
In this section, we describe the general form of the blockchain protocol without an adversary.

Consider a set of $N$ nodes on a peer-to-peer network.
Each node creates new binary items, called \emph{blocks}, over time.
Whenever a new block is created, the creating node \emph{references} at least one previous block.
Thereafter, those references are considered a part of the block.
Here, a reference implies that a block creating node \emph{trusts} any referenced block, as well as any indirectly referenced previous blocks.
Usually, there is a fixed rule $f$ that determines how a new block is attached to the blockchain, and we will assume so throughout this document.

The blocks are communicated amongst the nodes via the peer-to-peer network, such that every block is eventually disseminated to each node.
However, the blocks may experience arbitrary communication delays as they are spread through the network.

Thus far, we have addressed the question of achieving consensus on the sequence of blocks in parallel, but we have not addressed the question of identifying a subsequence of blocks that all nodes trust.
For convenience, we label each block by the order of its arrival.
The reference structure means that we can treat the blocks and their references as a directed acyclic graph, which we call a \emph{blockchain DAG}.
Each node maintains its own blockchain DAG, but different nodes' blockchain DAGs may differ due to communication delays.
Each node's blockchain DAG may also differ from the blockchain DAG, which consists of all created blocks.

Henceforth, we assume that vertices are labeled in order of their creation.
Suppose that there is a block labeled $b$ in the blockchain such that for some integer $m_b < \infty$, all blocks with labels at least $(b + m_b)$ have a path to $b$ in the blockchain DAG.
Since $N$ is finite, this implies that each node added a block that has a path to $b$.
We say that the block labeled $b$ is \emph{confirmed}.
The set of such confirmed blocks gives the desired subsequence of blocks trusted by all agents. We make two crucial remarks about this subsequence of confirmed blocks.

\begin{remark}
    The direction of the implication is that if a block is confirmed, then all nodes trust it.
    As a result, the subsequence of confirmed blocks is \emph{not necessarily} a maximal sequence of blocks that are trusted by all nodes.
    There may be other blocks that all nodes trust, but without the confirmation structure, the nodes cannot know of this agreement.
    This point cannot be addressed by the blockchain protocol, and we proceed without further comment.
\end{remark}
\begin{remark}
    It is not obvious or immediate that the set of confirmed blocks is infinite.
    If it is finite, the blockchain protocol takes infinite time to confirm only finitely many blocks (recall that confirmation cannot be determined at any finite time). We, therefore, require additional structure to ensure that the set of confirmed blocks is infinite.
    This additional structure is addressed through a concept called \emph{one-endedness} and is discussed in~\cite{gopalan2020stability, dey2022asymptotic, duffy2023almost}.
\end{remark}

\subsection{Nakamoto Blockchains with Adversaries}
In Bitcoin, new blocks are added to the blockchain according to the \emph{Nakamoto Rule}: a new vertex references one vertex of maximal hop distance to the root.
For the remainder of this work, we assume that a blockchain uses the Nakamoto rule.
For simplicity, we also assume that any ties are broken in favor of the least-indexed block (equivalently, the oldest block).

Our goal is to present a finite analog of block confirmation.
Recall that a block confirmation is a tail event and is not determinable at any finite time.
Nevertheless, if a block $b$ is confirmed, there exists a random variable $M_b < \infty$ such that all blocks with labels at least $(b + M_b)$ have a path to $b$.
Specifically, our goal is to determine the distribution of $M_b$.

In Bitcoin, $M_b$ is often taken to be at most $6$ ``w.h.p.'' (hence the name, ``6 Blocks Rule''), due to an erroneous computation in the Bitcoin whitepaper~\cite{nakamoto2008bitcoin} (hence the quotations around ``w.h.p.'').
In addition, the computation in~\cite{nakamoto2008bitcoin} does not take into consideration the effect of network delays, which are incorporated in this work.
To align the work more closely with the blockchain literature, we consider a model with a worst-case adversary which is based on the one defined by Dembo~\textit{et al.}~\cite{dembo2020everything}.

Our adversary behaves as follows.
Whenever an adversarial node adds a block, we allow them to simultaneously add multiple blocks, according to two rules.
First, if there is any adversarial leaf, one new adversarial block is added that references the adversarial leaf.
Next, if there are honest leaves at greater heights larger than any adversarial leaf, one adversarial block is added per such honest leaf and references that leaf's parent.
It is easy to see that any other choice by an adversary adds a subset of the nodes added by the adversary we consider in this paper.

Our model is the first to consider the dynamics of the time to consensus with an explicit characterization of delay and a worst-case adversary. The model is discussed more precisely in Section~\ref{sec:model}.

\subsection{Contributions of This Work}

Traditionally, Nakamoto-style blockchain security has been interpreted via the lead of the honest blockchain, compared to that of the adversary~\cite{guo2022bitcoin}.
However, the adversary may keep some blocks hidden, and it may be difficult to identify which blocks are actually adversarial.

A slightly different interpretation of blockchain security provides a clean interpretation of our results.
Instead, we consider the security problem from the adversary's perspective: in this case, a worst-case adversary knows the full state of the blockchain, \emph{and} which blocks are adversarial.
Instead of the (state-based) idea of security based on the lead of the honest party, we instead focus on the (time-based) interval of time until an attack fails.
The time-based approach is motivated by the fact that our model is \emph{not} a Markov chain.
Thus, our use of the term \emph{time to consensus} can also be interpreted as the time until a worst-case adversary's attack fails.

More specifically, we pose the time to consensus as the last passage of a $\bZ$-valued random walk.
The last passage of discrete-space processes is a notoriously difficult problem, and in general one cannot obtain the explicit distribution (or its transform).
This is also true in our model, but we are able to capture the behavior of certain other functionals.

In terms of other transaction processing systems, our results can be interpreted as follows.
Noting that blockchains are a data structure for distributed consensus, a guarantee on time to consensus is analogous to the ``3 business day" transaction processing time guarantee for a credit card---e.g., Visa or Mastercard---transaction.
Unlike Visa or Mastercard, the transaction list in a blockchain is \emph{not} privately managed, so it is important to incorporate an adversary in any model.
We note that without an adversary, the time to consensus for our model has already been addressed in a recent paper by Dey and Gopalan~\cite{dey2022asymptotic}.
In emerging applications for blockchains, such as in grocery supply chains~\cite{keskin2024blockchain}, blocks arrive quickly and thus different blocks' propagation through the peer-to-peer network has complicated dynamics~\cite{gopalan2023data}.
As a result, the time to consensus is a non-trivial property of the blockchain dynamics.

Our contributions consist of two types: results for a Bitcoin-specific stylized model, and results for a more general model that handles emerging applications like grocery supply chains.

\subsubsection{Results for Bitcoin}
In Section~\ref{sec:bitcoin}, we use these parameters to create a stylized version of our model to specifically analyze the time to consensus in Bitcoin, before proceeding to our more general model.

For the stylized model, we obtain the exact Laplace transform and tail decay for the time to consensus distribution.
This analysis relies crucially on properties of stable and unstable $M/M/1$ queues and is difficult to replicate in our more general model.
Through simulations, we also examine the time to consensus numerically.

We find that the expected time to consensus is at most $60$ minutes when $p \geq 0.72$, and that the probability that the time to consensus exceeds $60$ minutes is at most 10\% when $p \geq 0.84$, and at most 5\% when $p \geq 0.89$.
These suggest that the commonly taken folk rules require rather conservative estimates on the system parameters, in which the adversary is far from its critical value.

\subsubsection{General Results}
The analysis stylized Bitcoin model relies on knowing the (transforms of) the cycle length of a stable and unstable $M/M/1$ system, conditioned to be finite in the unstable case.
In general, it is difficult to obtain that quantity for more general systems such as $M/G/1$.

Instead, we resort to a different strategy to describe the last passage time.
Our method is indirect, and one challenge that we leave for future work is to revert from the time scale we consider back to the original time scale.
Our approach is to count the number of stable queue cycles until the time to consensus, noting that the positive drift is incurred by ``service completions'' that occur when the queue is empty.
If $X_{(n)}$ is the number of service completions that occur during the $n$-th empty period, and $Y_{(n)}$ is the maximum queue length during the $n$-th busy period, we examine the last passage of the quantity
$$\left(\sum_{i=1}^{n}X_{(n)}\right) - Y_{(n)}.$$

The problem of reverting from the number of cycles to the true time to consensus is difficult because of the positive correlation between large $X$ and $Y$ and large empty and busy periods, respectively.
Nevertheless, the number of cycles until the time to consensus has an important interpretation in the context of blockchains.
One cycle is an interval of time during which the adversary has a lead over the honest parties; our result thus counts the number of such intervals before consensus.

\subsection{Related Literature}
The literature on the time to consensus in a (general) blockchain is sparse, and thus, there are no results for the time to consensus in Nakamoto blockchains under non-trivial network delay.
The closest work is that of Guo and Ren~\cite{guo2022bitcoin}.
The authors use a similar setup as in our work, but specialized to bounded delays.
We note that they do not provide an explicit characterization of the network delay, other than the assumed upper bound for the delay.
Thus, their analysis does not adequately capture the full operating ranges of blockchains, and specifically, does not apply to emerging applications such as grocery supply chains in which network delays could be significant and unbounded.
Due to the simplicity of their model, they are able to reduce the problem to a situation with no delay---specifically, the property they reduce their model to is that all honest blocks are received by all honest nodes before the next honest block is created, which is equivalent to zero delay.
This analysis does not extend to the entire range in which the blockchain is well-behaved in the sense of~\cite{dey2022asymptotic}.
In particular, their technique to reduce the problem to zero delays does not extend to our model.

In the paper by Guo and Ren~\cite{guo2022bitcoin}, the zero-delay property means that the object they study is the last-passage of a skip-free random walk with drift.
Our introduction of non-trivial delays significantly complicates the analysis and requires a different set of tools.
To our knowledge, no other paper considers the appropriate last-passage time.

\section{Stylized Bitcoin Model}
\label{sec:bitcoin}
\subsection{Model}
Let $\xi$ be an improper probability measure supported on $\{1, \infty\}$.
Here, the event $\{\xi = 1\}$ will correspond to zero network delay, while the event ${\xi = \infty}$ corresponds to infinite network delay.
While no data provides $\pr(\xi = 1)$, we will conservatively lower-bound that value in a semi-principled way.
Then, our measure $\xi$ will correspond to a slower network, in terms of propagation delay, than the true network as measured by Decker and Wattenhofer~\cite{decker2013information}.

Owing to the simplicity of this delay model, when there is no adversary, every block will either be added to the longest chain when $\xi = 1$, or it will permanently be a leaf when $\xi = \infty$.

We consider the following discrete-time model.
Independently with probability $p$, each time step is \emph{honest}, and otherwise it is \emph{adversarial}.
Denote by $(\omega_t)_{t \in \bN}$ a sequence of i.i.d. Bernoulli random variables determining whether each time step is honest or adversarial.
Since $\xi$ is supported on $\{1, \infty\}$, it suffices to only keep track of the height of the honest blockchain and the height of the adversarial blockchains.
These evolve as follows:
\begin{align*}
    H_t &= H_{t-1} + \ind_{\omega_t = 1}\ind_{\xi_t = 1}, \\
    A_t &= A_{t-1} + \ind_{\omega_t = 0}.
\end{align*}
We will discuss the initial values $H_0$ and $A_0$ in the next section.

The \emph{time to consensus} is given by $$
\tau_C:=\inf_{t \in \bN}\{t: H_s \geq A_s\ \forall s \geq t\};
$$ 
that is, it is the last passage of the process $H_t - A_t$ to $\bZ_-$.
The time to consensus is \emph{a.s.}~finite when $p\pr(\xi = 1) > 1-p$, and is an improper random variable otherwise.
Thus, we restrict our attention to the case when $p\pr(\xi = 1) > 1-p$: this restriction is often referred to as the \emph{security threshold} in the Bitcoin literature (see, \textit{e.g.},~\cite{gavzi2020tight}).

\subsection{Bitcoin's Time to Consensus} 
Denote by $Q_t := \max(A_t - H_t, -1)$, for $t \geq 0$.
Observe that when embedded into a Poisson Point Process of rate $(1-p) + p\pr(\xi = 1)$, the increments of $Q_t$ can be coupled to those of an $M/M/1$ queue --specifically, taking the arrival rate to be $1-p$ and the service rate to be $p\pr(\xi=1)$, this coupling is achieved via uniformization.
For the remainder of this Bitcoin-specific section, we will use this embedding.
For notational convenience, we will use the following  for the remainder of this section:
\begin{align*}
    \lambda &:= 1-p, &&\mu := p\pr(\xi = 1),\\
    \rho &:= \frac{\lambda}{\mu} = \frac{1-p}{p\pr(\xi = 1)}\in(0,1), &&\hat{p} := \frac{\mu}{\lambda + \mu}\in (1/2,1).
\end{align*}

Due to the coupling with the $M/M/1$ queue, in the stationary regime, we can take a geometric random variable as the initial condition.
Specifically, the stationary measure for $Q_t + 1$ is distributed as $\mathrm{Geom}_0(\rho)$\footnote{For $X\sim\mathrm{Geom}_0(\rho)$, $\pr(X=i)=(1-\rho)^{i}\rho$ for $i\in\mathbb{Z}_+$.}. We start with the following basic facts, which we state without proof~\cite{bertsimas2022queueing}.
\begin{proposition}
    For a stable $M/M/1$ queue with arrival rate $\lambda$ and service rate $\mu > \lambda$, the Laplace transform of the busy period is given by $$B(s) := 
    \frac{\lambda + \mu + s - \sqrt{(\lambda + \mu + s)^2 - 4\lambda\mu}}{2\lambda}.
    $$ 
    Moreover, the Laplace transform of the cycle length and the residual busy period are, respectively, given by 
    \begin{align*}
        \Phi(s):=\frac{\gl}{\gl+s}\cdot B(s), 
    \qquad \Psi(s) := (\mu - \lambda)\cdot \frac{1 - B(s)}{s}.
    \end{align*}
\end{proposition}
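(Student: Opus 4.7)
The plan is to establish the three Laplace transforms in turn using standard queueing arguments: a first-step decomposition for $B(s)$, an independence argument for $\Phi(s)$, and a length-biasing identity for $\Psi(s)$. Since the authors label these as ``basic facts'' taken from \cite{bertsimas2022queueing}, the goal of a proof proposal is to identify the cleanest path to each.

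For the busy period, I would start a busy period with one customer in the system and condition on the first event, which occurs after an $\mathrm{Exp}(\lambda+\mu)$ time $T$. With probability $\mu/(\lambda+\mu)$ this event is a service completion and the busy period terminates at time $T$; with probability $\lambda/(\lambda+\mu)$ it is an arrival, after which, by the strong Markov property together with FIFO irrelevance, the remaining busy period is distributed as $B'+B''$, with $B',B''$ independent copies of $B$ (one to clear each of the two customers currently in the system). Taking Laplace transforms gives
\[
B(s) \;=\; \frac{\lambda+\mu}{\lambda+\mu+s}\left(\frac{\mu}{\lambda+\mu} + \frac{\lambda}{\lambda+\mu}B(s)^2\right),
\]
which rearranges to the quadratic $\lambda B(s)^2 - (\lambda+\mu+s)B(s)+\mu=0$. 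Solving and selecting the root that is bounded as $s\to 0$ yields the minus branch; the sign choice is then confirmed by the stability check $B(0)=\frac{\lambda+\mu-(\mu-\lambda)}{2\lambda}=1$, using $\mu>\lambda$.

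For the cycle length, I would use that a cycle decomposes as an idle period followed by a busy period, and that in an $M/M/1$ system the idle period is $\mathrm{Exp}(\lambda)$ and is independent of the subsequent busy period by the memoryless property of arrivals. Multiplicativity of Laplace transforms under independent sums then gives $\Phi(s) = \frac{\lambda}{\lambda+s}\cdot B(s)$ immediately.

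For the residual busy period, I would interpret $\Psi$ as the excess (forward recurrence) life obtained by sampling a uniformly random point within a typical busy period, and invoke the standard renewal-theoretic identity that if $X$ is a nonnegative random variable with mean $\E X$ then its length-biased excess has Laplace transform $(1-\E e^{-sX})/(s\E X)$. Differentiating the quadratic for $B(s)$ at $s=0$ gives $\E B = 1/(\mu-\lambda)$, whence $\Psi(s) = (\mu-\lambda)(1-B(s))/s$, and one checks $\Psi(0)=1$ by L'Hôpital. The only mildly delicate step in the whole argument is the root selection in $B(s)$: one must verify that the minus branch is the physically correct one in the stable regime, which reduces to $\sqrt{(\mu-\lambda)^{2}}=\mu-\lambda$. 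Everything else reduces to manipulations of exponential distributions and elementary renewal theory.
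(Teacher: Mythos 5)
The paper states this proposition without proof, citing \cite{bertsimas2022queueing} for these facts, so there is no in-paper argument to compare against. Your derivation is correct and is the standard textbook one: the first-step sub-busy-period decomposition giving the quadratic $\lambda B^2-(\lambda+\mu+s)B+\mu=0$ for $B$, independence of the $\mathrm{Exp}(\lambda)$ idle period and the ensuing busy period for $\Phi$, and the equilibrium (excess-life) transform $(1-\E e^{-sX})/(s\,\E X)$ together with $\E B = 1/(\mu-\lambda)$ (obtained by implicit differentiation at $s=0$) for $\Psi$; the root selection you flag is indeed the only delicate point, and your check $B(0)=1$ (with the plus branch rejected since it gives $\mu/\lambda>1$ and diverges as $s\to\infty$) settles it.
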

\begin{proposition}
    The probability that the busy period (and hence cycle length) of an unstable $M/M/1$ queue with arrival rate $\mu$ and service rate $\lambda < \mu$ is finite is $\rho$.
    The Laplace transform of the cycle length for the unstable $M/M/1$ queue, conditioned on being finite, is given by 
    $$\Gamma(s) := \frac{\mu}{\mu+s}\cdot B(s).
    $$
\end{proposition}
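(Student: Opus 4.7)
The plan is to derive a quadratic fixed-point equation for $\tilde{B}(s):=\E[e^{-sB}\ind_{B<\infty}]$, the Laplace transform of the (possibly infinite) busy period $B$ of the unstable queue restricted to the event of finiteness, and to observe that this equation is structurally identical to the one for the stable queue with $\gl$ and $\mu$ swapped. Both assertions then follow by a direct comparison with the first proposition.

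Conditioning on the first event after the busy period starts with a single customer in service---an exponential clock of rate $\mu+\gl$ that is a service completion with probability $\gl/(\mu+\gl)$ (ending the busy period) or an arrival with probability $\mu/(\mu+\gl)$ (after which, by the standard branching decomposition of an $M/M/1$ busy period, two independent copies of $B$ must complete before the queue empties again)---I obtain
$$\tilde{B}(s) = \frac{\gl}{\mu+\gl+s} + \frac{\mu}{\mu+\gl+s}\,\tilde{B}(s)^2.$$
Solving this quadratic and taking the appropriate root (discussed below), the closed form is
$$\tilde{B}(s) = \frac{(\mu+\gl+s)-\sqrt{(\mu+\gl+s)^2-4\mu\gl}}{2\mu} = \rho\,B(s),$$
where the second equality follows because the expressions for $\tilde{B}(s)$ and $B(s)$ share the same numerator and differ only in the prefactor $1/(2\mu)$ versus $1/(2\gl)$. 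Setting $s=0$ yields $\pr(B<\infty) = \tilde{B}(0) = \rho\, B(0) = \rho$, proving the first assertion, and the conditional transform is $\E[e^{-sB}\mid B<\infty] = \tilde{B}(s)/\rho = B(s)$.

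For the cycle length, I observe that a cycle decomposes as an empty period (the waiting time for the first arrival after the queue empties, which is Exp$(\mu)$ and independent of the ensuing busy period) plus the busy period. Conditioning on $\{B<\infty\}$ does not affect the empty period, so multiplying its Laplace transform $\mu/(\mu+s)$ by $B(s)$ yields $\Gamma(s) = \frac{\mu}{\mu+s}\cdot B(s)$, as claimed.

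The main subtle point is the root selection. The two roots evaluated at $s=0$ are $1$ and $\rho$, both \emph{a priori} admissible as values of a sub-probability generating function, so the correct branch cannot be pinned down at $s=0$ alone; this is where the argument departs from its stable-case analogue, where the plus branch is immediately ruled out by exceeding $1$. The resolution is that $\tilde{B}(s)\to 0$ as $s\to\infty$ by dominated convergence, whereas the plus-sign branch grows linearly in $s$, forcing the choice of the minus-sign branch.
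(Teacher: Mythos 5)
Your derivation is correct. The paper states this proposition (like the one preceding it) as a standard fact without proof, citing textbook sources, so there is no paper argument to compare against; your branching fixed-point derivation of $\tilde{B}(s)=\rho\,B(s)$ is exactly the textbook route, and the decomposition of a regenerative cycle into an independent $\mathrm{Exp}(\mu)$ idle period plus the busy period correctly produces $\Gamma(s)=\frac{\mu}{\mu+s}B(s)$, consistent with the convention implicit in the paper's $\Phi(s)=\frac{\lambda}{\lambda+s}B(s)$ for the stable case. The one genuinely subtle step is the branch selection, which you identify and resolve cleanly: since both roots evaluate to admissible values ($1$ and $\rho$) at $s=0$, the plus branch cannot be eliminated pointwise the way it can in the stable case, and the large-$s$ asymptotics (dominated convergence forcing $\tilde{B}(s)\to 0$, while the plus branch grows like $s/\mu$) is the right way to pin it down. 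An equivalent shortcut: the plus branch exceeds $1$ as soon as $s>\mu-\lambda$, contradicting $\tilde{B}(s)\le 1$; but your argument is complete as written.
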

For complex, non-real, $s$, it can be directly checked that the function $B(s)$ is also complex and non-real when $\mathrm{Re}(s) > -s_*$.
For real $s$, the function $B(\cdot)$ is well-defined for $s$ with $\mathrm{Re}(s) \ge -s_\star:= 2\sqrt{\gl\mu}-(\gl+\mu)$ with $B(-s_\star)=\rho^{-1/2}$. The following theorem is an obvious consequence of the previous propositions. 

\begin{theorem}\label{thm:tauC*}
Let $\kappa(s) := \frac{(1-\hat{p})\Gamma(s)}{1-\hat{p}\Phi(s)} = \frac{\mu  (\gl+s) (1-\hat{p})  B(s)}{(\mu+s)(\gl+s-\gl\hat{p} B(s))}$. 
    The Laplace transform of the time to consensus is given by:
    \begin{align*}
        \tau_C^*(s) 
        &:= \left(\rho\Psi(s)+1-\rho\right)\cdot \frac{1-\rho}{1 - \rho \kappa(s)}\\
        &=\left(\rho\Psi(s)+1-\rho\right)\cdot \frac{(1-\rho)(\mu+s)(\gl+s-\gl \hat{p} B(s))}{(\gl+s)(\mu+s)- \gl \hat{p} B(s)\left(\mu(1+\rho^2)+(1+\rho)s\right)}.
    \end{align*}
\end{theorem}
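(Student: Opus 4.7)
My plan is to decompose $\tau_C$ into an initial residual busy period of the queue $Q_t+1$ plus a $\mathrm{Geom}(1-\rho)$ number of subsequent independent ``cycles,'' each with Laplace transform $\kappa(s)$; the two factors of the theorem then correspond to these two pieces.

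First, I observe that $\tau_C$ is the last-passage time of the continuous-time random walk $S_t := A_t - H_t$ through $\{1,2,\ldots\}$, which is a.s.\ finite because $S_t$ has negative drift $\lambda-\mu<0$ under the security assumption. Under the stationary initial distribution $Q_0+1\sim\mathrm{Geom}_0(\rho)$, I would condition on whether $Q_0=-1$ (queue empty, probability $1-\rho$) or $Q_0\ge 0$ (in a busy period, probability $\rho$). In the latter case the residual busy period has Laplace transform $\Psi(s)$ by the cited proposition, and the time $R_0$ until the queue first becomes empty therefore has LT $\rho\Psi(s)+(1-\rho)$ --- the first factor.

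Second, after $R_0$ we have $S_{R_0}=-1$. From there, the classical first-passage formula shows that $S_t$ returns from $-1$ to $0$ (starting a new above-excursion) with probability $\rho$, independent of the past by the strong Markov property. Iterating, the number $N$ of future cycles that actually occur has distribution $\pr(N=n)=(1-\rho)\rho^n$, and summing the i.i.d.\ cycle LTs produces the geometric-series factor $(1-\rho)/(1-\rho\kappa(s))$.

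Third, I would obtain $\kappa(s)$ via a first-step analysis at $S=-1$, using the Poisson splitting of the combined event stream at rate $\lambda+\mu$ into independent arrival and service streams at rates $\lambda$ and $\mu$. The resulting recursion reads
\[
\kappa(s) \;=\; \hat{p}\,\Phi(s)\,\kappa(s) \;+\; (1-\hat{p})\,\Gamma(s),
\]
in which the service branch (probability $\hat{p}$) contributes a stable-$M/M/1$ cycle of LT $\Phi(s)$ before the walk returns to $-1$ and the cycle recurses, and the terminating arrival branch (probability $1-\hat{p}$) contributes an above-excursion whose conditional LT equals the unstable $M/M/1$ cycle length $\Gamma(s)$ via the duality $B_{\mathrm{unstable}}(s)=\rho B(s)$ between stable and unstable busy periods. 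Solving gives $\kappa(s)=(1-\hat{p})\Gamma(s)/(1-\hat{p}\Phi(s))$, and the product of the two factors, after substituting the explicit forms of $\Phi$, $\Gamma$, and $\Psi$, reduces to the claimed closed-form Laplace transform.

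The main obstacle is the third step: since $Q_t=\max(S_t,-1)$ is not itself an $M/M/1$ queue, identifying each sub-cycle with a stable or unstable $M/M/1$ cycle requires careful use of Poisson splitting at the reflection boundary $-1$ --- which produces the factors $\lambda/(\lambda+s)$ and $\mu/(\mu+s)$ seen in $\Phi$ and $\Gamma$ --- together with the duality identity above, which makes the stable busy-period LT $B(s)$ appear (up to multiplication by $\rho$) in the unstable cycle as well.
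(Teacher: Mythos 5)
Your proposal matches the paper's argument essentially step for step: the same mixture over the stationary initial condition giving the factor $\rho\Psi(s)+1-\rho$, the same geometric number of super-cycles giving $(1-\rho)/(1-\rho\kappa(s))$, and the same identification of $\kappa$ as one finite unstable cycle preceded by $\mathrm{Geom}(1-\hat{p})$-many stable cycles (your fixed-point recursion $\kappa=\hat{p}\Phi\kappa+(1-\hat{p})\Gamma$ is just the paper's geometric sum written implicitly). The paper's own proof is no more detailed than yours---it is a verbal description of the cycle decomposition with a figure---so your write-up is at the same level of rigor and takes the same route.
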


\begin{proof}
    We will discuss the sequence of events until the time to consensus, which follows from a certain mixture over the initial condition. The final result is then obvious from the preceding propositions.

    See the pictorial representation in Figure~\ref{fig:bitcoin-ttc}.

    \begin{figure}[h!]
        \centering
        \includegraphics[width=0.8\linewidth]{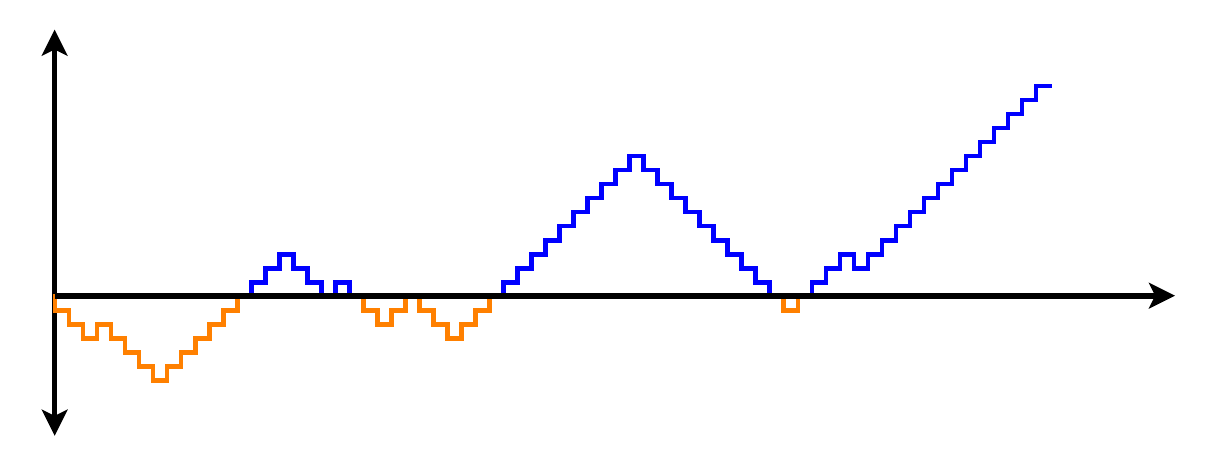}
        \caption{Event leading to the time to consensus. Here, the orange sections are stable busy periods, and the blue sections are unstable ones. The time to consensus occurs at the beginning of the first infintie unstable busy period.}
        \label{fig:bitcoin-ttc}
    \end{figure}

    \textbf{Case 1:} With probability $\rho$, the arriving honest block sees that the adversary is currently ahead of the honest parties. After the residual busy period with Laplace transform $\Psi(\cdot)$, the time to consensus is as follows. There are geometrically many stable cycles with Laplace transform $\Phi(\cdot)$ before a finite unstable cycle with Laplace transform $\Gamma(\cdot)$, and this continues until an infinite unstable busy period occurs.
    Each of these sequences of a finite unstable cycle followed by geometrically many stable cycles has a transform $\kappa(\cdot)$.

    \textbf{Case 2: } With probability $1 - \rho$, the arriving honest block sees that the honest is ahead of the adversary. Then the time to consensus is as follows. There are geometrically many stable cycles before a finite unstable cycles, and this continues until an infinite unstable busy period occurs.
\end{proof}

\subsection{Mean and Tail Decay for Bitcoin's Time to Consensus}

We compute the mean by evaluating $\frac{\mathrm{d}}{\mathrm{d}s}\tau^*_C(s)$ at $s = 0$, to get
\begin{align*}
    \E \tau_C   
    = \frac{\mathrm{d}}{\mathrm{d}s}\tau_C^*(s)\biggl\vert_{s=0}
    =\rho \Psi'(0) + \frac{\rho}{1-\rho}\kappa'(0)
    = \rho \Psi'(0) + \frac{1}{1-\rho} \left(\rho\Gamma'(0) +\Psi'(0)\right).
\end{align*}
Indeed, this expression also follows from Figure~\ref{fig:bitcoin-ttc}.

\begin{theorem}
    There exists $s_{**} \in (0, s_*)$ such that $\tau^*_C(s)$ has a simple dominant pole at $-s_{**}$.
\end{theorem}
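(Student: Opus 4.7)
My plan is to locate $-s_{**}$ as the unique real root, in the interval $(-s_*, 0)$, of the equation $\rho\kappa(s)=1$; this is the non-trivial part of the denominator of the second factor of $\tau_C^*(s)$. From the proof of Theorem~\ref{thm:tauC*}, $\kappa(s)$ is the Laplace transform of a proper probability distribution, namely the length $T$ of one ``super-cycle'' (one finite unstable cycle followed by geometrically many stable cycles). Thus $\rho\kappa$ is the Laplace transform of a positive measure of total mass $\rho<1$; on the real axis it is real analytic, strictly decreasing in $s$ with $\kappa'(s)=-\E[Te^{-sT}]<0$, and equals $\rho$ at $s=0$.

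I would then apply the intermediate value theorem on $(-s_*, 0]$, splitting into two cases according to whether $\hat p\Phi(-s_*)\ge 1$ or $<1$. In the former case, monotonicity of $\Phi$ gives some $-s_\dagger\in(-s_*,0)$ at which $\hat p\Phi(-s_\dagger)=1$, and $\rho\kappa(s)\to+\infty$ as $s\to -s_\dagger^+$, so IVT on $(-s_\dagger,0]$ supplies the crossing. In the latter case, $\rho\kappa$ is analytic and finite at $-s_*$, and the required inequality $\rho\kappa(-s_*)>1$, equivalently $\rho(1-\hat p)\Gamma(-s_*)+\hat p\Phi(-s_*)>1$, reduces after substituting $B(-s_*)=\rho^{-1/2}$, $\gm-s_*=u(2v-u)$, $\gl-s_*=v(2u-v)$ with $u:=\sqrt{\gl}$ and $v:=\sqrt{\gm}$ to the polynomial inequality
\begin{align*}
2(u^2+v^2)^2-3uv(u^2+v^2)-2u^2v^2>0.
\end{align*}
With $t:=(u^2+v^2)/(uv)\ge 2$ by AM-GM, the left side equals $u^2v^2(2t+1)(t-2)$, which is strictly positive since $\gm>\gl$ forces $t>2$. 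Either way, IVT produces $-s_{**}\in(-s_*,0)$.

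Simplicity of the pole is immediate: $\tfrac{d}{ds}[1-\rho\kappa(s)]\big|_{-s_{**}}=-\rho\kappa'(-s_{**})>0$ by the strict monotonicity noted above. For dominance I would invoke the standard inequality $|\kappa(s)|\le\kappa(\mathrm{Re}(s))$ for Laplace transforms of positive measures; it gives $|\rho\kappa(s)|\le\rho\kappa(\mathrm{Re}(s))<1$ on $\{\mathrm{Re}(s)>-s_{**}\}$, ruling out further zeros of the denominator there. The remaining factors of $\tau_C^*$, namely $\rho\Psi(s)+1-\rho$ and the branch of $B(s)$, are analytic on $\{\mathrm{Re}(s)>-s_*\}\supset\{\mathrm{Re}(s)\ge -s_{**}\}$, so no other singularities compete.

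The main obstacle I anticipate is the case split at the branch point $-s_*$: one must justify the passage to the boundary carefully, and push through the radical inequality there. Beyond this computation, the argument is a short exercise in IVT together with monotonicity and the modulus bound for Laplace transforms of positive measures.
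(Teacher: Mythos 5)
Your proposal is correct but takes a genuinely different route from the paper. The paper works directly with the explicit denominator $D(s)$ of $\tau_C^*$, rescales via $g(x)=\frac{2}{(\gl+\mu)^2}D((\gl+\mu)x)$ where $\theta=2\hat p(1-\hat p)$, then applies the intermediate value theorem between the explicit endpoint values $g(0)=\sqrt{1-2\theta}(1-\theta-\sqrt{1-2\theta})>0$ and $g(\sqrt{2\theta}-1)=(\sqrt{2\theta}-1)(\theta+2\sqrt{2\theta})<0$, and proves simplicity/uniqueness by exhibiting a strictly positive lower bound on $g'$ at any point where $g\ge 0$. It is a single unified algebraic verification with no case split. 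You instead use the probabilistic meaning of $\kappa$: it is a proper Laplace transform (a finite unstable cycle followed by geometrically many stable cycles), so $\rho\kappa$ is the LT of a sub-probability measure, strictly decreasing in real $s$ with $\rho\kappa(0)=\rho<1$; the IVT supplies the crossing, strict monotonicity gives simplicity, and the modulus bound $|\kappa(s)|\le\kappa(\mathrm{Re}\,s)$ gives dominance. Your approach is more conceptual and would transfer to cycle decompositions where the explicit denominator is unavailable; the paper's approach is entirely elementary and avoids appealing to the probabilistic interpretation of $\kappa$.

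Two small points on your write-up. First, the Case 1/Case 2 dichotomy is better phrased as ``there exists $s_\dagger\in(0,s_*)$ with $\hat p\Phi(-s_\dagger)=1$'' versus not, rather than by the sign of $\hat p\Phi(-s_*)-1$: the LT $\Phi(s)=\frac{\gl}{\gl+s}B(s)$ has a pole at $s=-\gl$, and when $\mu>4\gl$ we have $\gl<s_*$, so $\Phi$ blows up strictly before reaching $-s_*$ and ``$\hat p\Phi(-s_*)\ge 1$'' is not well-posed; your IVT argument still works since $\hat p\Phi$ tends to $+\infty$ at the pole, but the formulation needs adjusting. Second, to conclude that $-s_{**}$ is a pole of $\tau_C^*$ itself (not just of $\tfrac{1}{1-\rho\kappa}$), you should note that $\rho\Psi(-s_{**})+1-\rho>0$; this is immediate since $B(-s_{**})>1$ and $-s_{**}<0$ give $\Psi(-s_{**})=(\mu-\gl)\frac{1-B(-s_{**})}{-s_{**}}>0$. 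Your Case 2 algebra (with $u=\sqrt\gl,v=\sqrt\mu$, $\mu-s_*=u(2v-u)$, $\gl-s_*=v(2u-v)$, and the factorization $2t^2-3t-2=(2t+1)(t-2)$ with $t=(u^2+v^2)/(uv)>2$) checks out, and the restriction $\gl-s_*>0$ needed for it is exactly what Case 2 guarantees.
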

\begin{proof}
    Let $$D(s) := (\gl+s)(\mu+s)- \gl \hat{p} B(s)\left(\mu(1+\rho^2)+(1+\rho)s\right)$$ be the denominator in the transform $\tau^*_C(s)$ as given in Theorem~\ref{thm:tauC*}. 
    We first show the existence of a dominant pole at a point $-s_{**}$ in the interval $(-s_*,0)$. 
    Define $\theta:=2\hat{p}(1-\hat{p})\in (0,1/2)$. Note that we have $-s_{*}=(\gl+\mu)(\sqrt{2\theta}-1)$. 
    
    We will work with the function 
    \[
    g(x):=\frac{2}{(\gl+\mu)^2}\cdot D((\gl+\mu)x).
    \]
    Observe that 
    \begin{align*}
        g(x) &=x^2+\theta x+2\theta-1+(1+x-\theta)\sqrt{(1+x)^2-2\theta}\\
        \text{and }
        g'(x) &= 2x+\theta+2\sqrt{(1+x)^2-2\theta} +\frac{\theta(1-x)}{\sqrt{(1+x)^2-2\theta}}.
    \end{align*}
    Clearly, $g(\cdot)$ has finitely many roots.
    Let
    \[
    x_*:=\sup\{x\ge \sqrt{2\theta}-1 \mid g(x)=0\}.
    \]
    Note that, $x_*$ exists and is $<0$, as 
    \begin{align*}
        g(0) &=\sqrt{1-2\theta} \cdot (1-\theta-\sqrt{1-2\theta})>0,\qquad g'(x)>0\text{ for } x\ge 0\\
        \text{and } g(\sqrt{2\theta}-1) &= 3\theta-(2-\theta)\sqrt{2\theta} =(\sqrt{2\theta}-1)(\theta +2\sqrt{2\theta}) <0
    \end{align*}
    with $0 < 2\theta < 1$. 
    By continuity, we have $g(x_*)=0$ and for $x\ge x_*$ we have $g(x)\ge 0$ or $\sqrt{(1+x)^2-2\theta} \ge (1-2\theta-\theta x-x^2)/(x+1-\theta)$. Thus, for $x\in [x_*,0]$ we get
    \begin{align*}
    g'(x) &\ge 2x+\theta+2\cdot \frac{1-2\theta-\theta x-x^2}{x+1-\theta} +\frac{\theta(1-x)}{\sqrt{(1+x)^2-2\theta}}\\
    &= 2-3\theta + \frac{2\theta(1-2\theta)}{x+1-\theta} +\frac{\theta(1-x)}{\sqrt{(1+x)^2-2\theta}} >0.
    \end{align*}
    In fact, the above argument shows that at all roots of $g$ the derivative is strictly positive and thus $x_*$ is the unique solution to $g(x)=0$.
    Define
    \[
    s_{**}=-(\gl+\mu)x_*,
    \]
    so that $D(-s_{**})=0, D'(-s_{**})>0$ and $D(\cdot)$ is strictly positive on $(-s_{**}, \infty)$.

    The existence of a simple dominant pole at $-s_{**}$ follows. 
\end{proof}

The following is a consequence of the Tauberian theorem.
\begin{corollary}
    The tail probability $\pr(\tau^*_C>x)$ decays as $(c+o(1))\cdot e^{- x s_{**}}$ as $x\to\infty$ for some constant $c>0$.
    \label{cor:tail-prob-bitcoin}
\end{corollary}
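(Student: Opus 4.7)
The plan is to apply a Wiener--Ikehara / Tauberian style inversion to the Laplace transform $\tau_C^*(s)$, extracting the contribution of the dominant simple pole at $-s_{**}$ established in the previous theorem. Schematically, one writes
\[
\tau_C^*(s) \;=\; \frac{r}{s + s_{**}} \;+\; h(s),
\]
with $h$ analytic in a strictly larger half-plane $\{\mathrm{Re}(s) > -s_{**} - \eta\}$ for some $\eta>0$; the pole then contributes a density proportional to $e^{-s_{**} t}$ while $h$ contributes a strictly faster-decaying tail, giving $\pr(\tau_C > x) = (c+o(1))\, e^{-s_{**} x}$ with $c = r/s_{**}$.

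Concretely, I would proceed in three steps. First, I would nail down the strip of analyticity: $B(s)$ is analytic on $\{\mathrm{Re}(s) > -s_*\}$ (the square root in its defining formula is non-vanishing there), so $\Phi, \Psi, \Gamma, \kappa$, and hence $\tau_C^*$ are meromorphic on this half-plane. The previous theorem already gives that $D(s)$ vanishes simply at $-s_{**}$ and is strictly positive on the real interval $(-s_{**}, \infty)$; I would promote this by showing $D$ has no other zeros with $\mathrm{Re}(s) \geq -s_{**}$, using $|B(s)| \leq B(\mathrm{Re}(s))$ and a termwise magnitude comparison in the formula for $D$. Second, I would apply the Bromwich inversion formula to the tail transform $(1-\tau_C^*(s))/s$, shifting the contour from $\{\mathrm{Re}(s) = \epsilon\}$ past $-s_{**}$ to $\{\mathrm{Re}(s) = -s_{**} - \eta/2\}$ for some $\eta \in (0, s_* - s_{**})$; the residue at $-s_{**}$ yields $c\, e^{-s_{**} x}$, and the shifted contour integral is $O(e^{-(s_{**}+\eta/2)x})$. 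Third, the constant $c$ emerges as $c = r/s_{**}$ where $r = -\lim_{s \to -s_{**}}(s+s_{**})\tau_C^*(s)$, computable in closed form from $D'(-s_{**})$ and the numerator of $\tau_C^*$ at $-s_{**}$.

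The main obstacle will be the analytic prerequisites for the contour shift: excluding non-real singularities of $\tau_C^*$ in the closed half-plane $\mathrm{Re}(s) \geq -s_{**}$, and establishing enough decay of $\tau_C^*$ along vertical lines to justify closing the contour. The square-root branch in $B(s)$ makes the complex magnitude comparison a little delicate off the real axis; a cleaner alternative is probabilistic, namely showing directly that $\E[e^{t\tau_C}] < \infty$ for every $t < s_{**}$ (which follows from geometric tail bounds on the number and individual lengths of the stable/unstable cycles appearing in the proof of Theorem~\ref{thm:tauC*}), so that $\tau_C^*$ is automatically analytic in $\{\mathrm{Re}(s) > -s_{**}\}$. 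Vertical-line decay follows from the estimate $B(s) = O(|s|^{-1/2})$ as $|\mathrm{Im}(s)| \to \infty$, which cascades to $\tau_C^*$ via the explicit formula and closes the argument.
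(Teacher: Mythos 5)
Your plan matches the paper's intent exactly: the paper's entire proof is the single sentence ``a consequence of the Tauberian theorem,'' and what you've written out is the contour-shift/residue instantiation of that sentence, which is the right way to extract $(c+o(1))e^{-s_{**}x}$ from a simple dominant pole. You also correctly flag the genuine gap that the paper glosses over: the preceding theorem's proof works entirely on the real axis, establishing a unique real zero $-s_{**}$ of $D(s)$ and positivity of $D$ on $(-s_{**},\infty)$, but it never rules out non-real zeros of $D$ with $\mathrm{Re}(s)\ge -s_{**}$. That exclusion is precisely what makes the word ``dominant'' in the theorem statement do work, and some version of your $|B(s)|\le B(\mathrm{Re}(s))$ comparison (or an argument-principle count) really is needed.

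One caution on your proposed ``cleaner alternative.'' Proving $\E[e^{t\tau_C}]<\infty$ for all $t<s_{**}$ only establishes analyticity of $\tau_C^*$ in the \emph{open} half-plane $\{\mathrm{Re}(s)>-s_{**}\}$. It does not exclude singularities on the critical line $\{\mathrm{Re}(s)=-s_{**}\}$ other than $-s_{**}$ itself, and it is exactly such complex-conjugate poles on the line that would introduce oscillatory factors and break the clean $(c+o(1))e^{-s_{**}x}$ asymptotic. So the probabilistic shortcut replaces the contour-shift machinery but does \emph{not} replace the need for the magnitude estimate on $B(s)$ off the real axis; you still have to show $D(-s_{**}+i\beta)\neq 0$ for $\beta\neq 0$. (A useful structural fact here is that for a Laplace transform of a nonnegative measure, Landau's theorem already places a singularity at the abscissa of convergence; the remaining work is uniqueness on that vertical line.) Minor point: your asserted decay $B(s)=O(|s|^{-1/2})$ along vertical lines is not right — expanding the square root gives $B(s)=\mu/(\lambda+\mu+s)+O(|s|^{-3})$, i.e.\ $O(|s|^{-1})$ — but this is a harmless overestimate in the favorable direction and does not affect the argument.
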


\subsubsection{Numerical Evaluation}
We choose the probability $\pr(\xi = 1)$ as follows, based on empirical studies.
Decker and Wattenhofer~\cite{decker2013information} find that $95\%$ of blocks are fully propagated within $40$ seconds of their creation; we assume \textit{a priori} that the remaining $5\%$ are such that the next block arrives during their propagation.
Using data from Bowden \textit{et al.}~\cite{bowden2020modeling}, we find that $3.9\%$ of blocks arrive within $40$ seconds of the previous block.
Thus, we make the conservative choice of $\pr(\xi = 1) = 0.9$.
Both of these data sources were taken for the blocks at heights $180,000$ to $190,000$.

We re-scale time such that $\lambda + \mu = \frac{1}{10}\text{ mins}$, which is consistent with the inter-arrival time in Bitcoin (this includes all blocks, honest or adversarial).
We simulate those $p$ for which the expected time to consensus is at most $60$ minutes, which is the expected time for $6$ blocks of any type to arrive in the Bitcoin system.
This value is approximately given by $p \geq 0.72$.
For each $p \in [0.72, 1]$, (with increments of $0.01$), we simulate the system until 1000 blocks arrive with $H_{(\cdot)} > A_{(\cdot)}$ and use that as a proxy to determine the last passage time from the trajectory.
For each such $p$, we simulate $25,000$ independent copies of the system.
The mean time to consensus, and the fraction of samples in which the time to consensus exceeds $60$ minutes, are shown in Figures~\ref{subfig:a} and~\ref{subfig:b}.

\begin{figure}[htbp]
	\begin{subfigure}[t]{.43\columnwidth}
		\centering
		\includegraphics[height=2.5in]{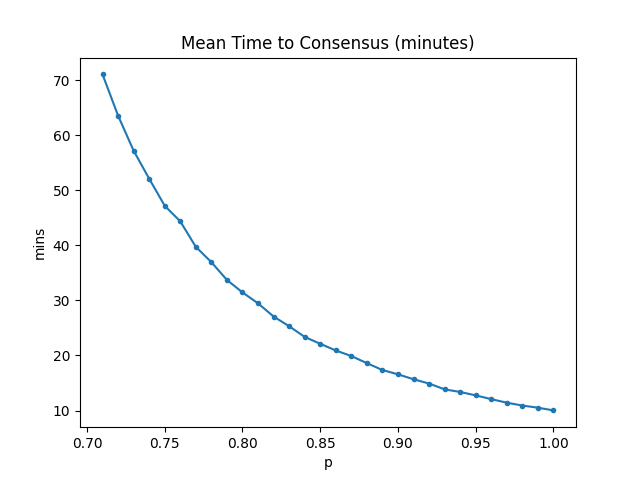}
		\caption{Mean time to consensus.}
		\label{subfig:a}
	\end{subfigure}
	\begin{subfigure}[t]{.57\columnwidth}
		\centering
		\includegraphics[height=2.5in]{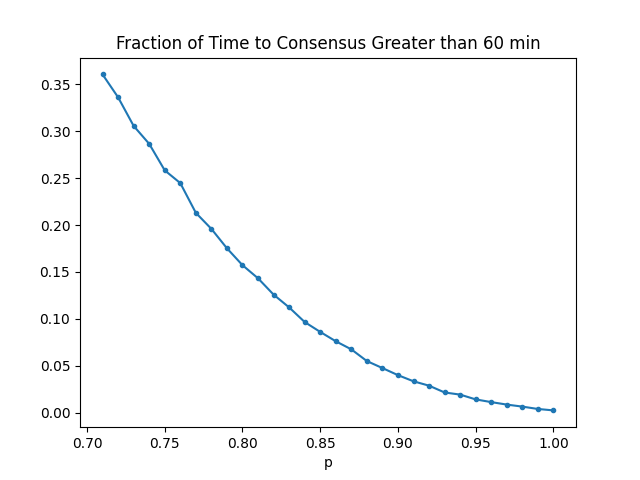}
		\caption{Fraction of samples with \emph{time to consensus} $>60$ minutes.}
		\label{subfig:b}
	\end{subfigure}
	\caption{Simulations in the stylized Bitcoin model with 25,000 samples.}
	\label{fig:TTC}
\end{figure}

For $p = 0.72$ (the closest mean time to consensus to $60$ minutes) and for $p \in \{0.84, 0.89\}$ (the closest values such that the time to consensus only exceeds $60$ minutes 10\% and 5\% of the time, respectively), we also show the empirical distribution of the time to consensus.
These are shown in Figure~\ref{fig:Emp}.
We also plot the theoretical rate predicted in Corollary~\ref{cor:tail-prob-bitcoin} and observe that the slopes match, as expected.

\begin{figure}[htbp]
	\begin{subfigure}[t]{.33\columnwidth}
		\centering
		\includegraphics[width=\linewidth]{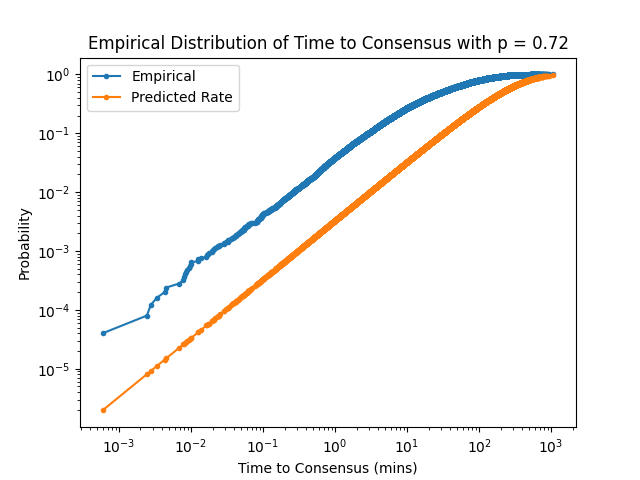}
		\caption{$p=0.72$.}
		\label{subfig:a1}
	\end{subfigure}
	\begin{subfigure}[t]{.33\columnwidth}
		\centering
		\includegraphics[width=\linewidth]{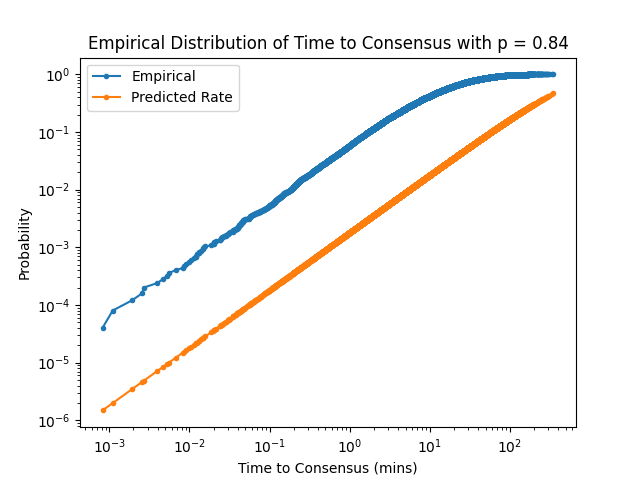}
		\caption{$p=0.84$}
		\label{subfig:b1}
	\end{subfigure}
    \begin{subfigure}[t]{.33\columnwidth}
		\centering
		\includegraphics[width=\linewidth]{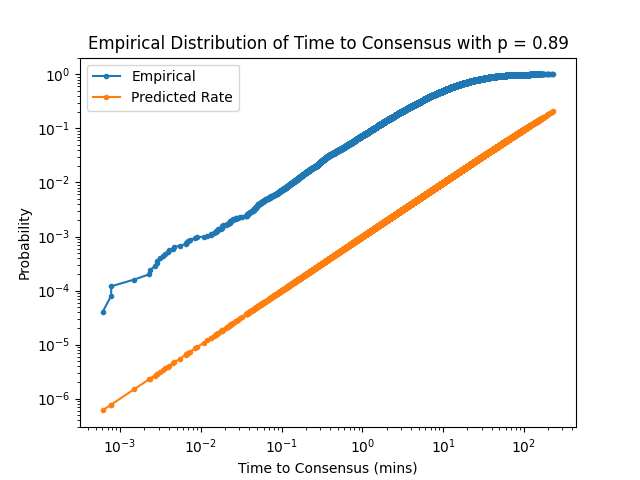}
		\caption{$p=0.89$}
		\label{subfig:c1}
	\end{subfigure}
	\caption{Empirical distribution of \emph{time to consensus} with $p\in\{.72,.84,.89\}.$ The blue line is the empirical distribution, and the orange line is the theoretical rate predicted by Corollary~\ref{cor:tail-prob-bitcoin}}
	\label{fig:Emp}
\end{figure}

\section{General Model and Main Results}
\label{sec:model}
\subsection{Stochastic Growth Process Model}
Our model is as follows.
Fix a parameter $p>{1}/{2}$.
Let $(\omega_t)_{t \in \bN}$ be \emph{i.i.d.}~$\mathrm{Ber}(p)$, and let $(\xi_t)_{t \in \bN}$ be an \emph{i.i.d.}~sequence of $\bN$--valued random variables with finite mean.
The driving sequences $(\omega_t)$ and $(\xi_t)$ are independent.

We consider the following stochastic growth model $(G_t)_{t \in \bZ_+}$. Let $G_0$ be a tree consisting of a single root vertex and no edges. 
The root vertex is both \emph{honest} and \emph{adversarial} and will continue to be the root vertex in $G_t$ for all $t \geq 0$.
Throughout, we refer to the \emph{honest subgraph} and \emph{adversarial subgraph} to be the subgraphs consisting of honest and adversarial vertices, respectively.
The dynamics at each time $t \geq 1$ are as follows.
Whenever a new vertex is added, it has degree one. If the new vertex connects to an existing vertex $v$, we say that the new vertex is added to $v$.

If $\omega_t = 1$, then a single honest vertex is added to the tree.
This is done via the \emph{Nakamoto rule}: the new vertex connects to the least-indexed vertex of maximal hop distance to the root in the honest subgraph of $G_{(t - \xi_t)_+}$.
Notice that the honest subgraph is always a connected tree.

If instead $\omega_t = 0$, then adversarial vertices are added as follows.
For each leaf in the adversarial subgraph, one adversarial vertex is added to that leaf.
In addition, for each vertex in the honest subgraph, if its parent does not already have an adversarial child, one adversarial vertex is added to its parent.
Here, the driving variables $(\xi_t)$ represent \emph{network delays}, which only affect the honest parties.
The assumption that the adversarial parties do not face network delay corresponds to a worst-case setting for blockchain operation.

It has been established by Dey and Gopalan~\cite{dey2022asymptotic} that the limiting honest subgraph is a one-ended tree.
In this limiting tree, there is exactly one confirmed block at any height $h$ from the root (notice that it is the first honest block added at height $h$).
Our question of interest is, for large enough $p$, to determine the distribution of the time after which the height of the honest subgraph permanently exceeds the height of all adversarial descendants of a confirmed block of height $h$.
This is the \emph{time to consensus}.

\subsection{Queueing Structure}
It is shown in Dey and Gopalan~\cite{dey2022asymptotic} that the height of the honest subgraph satisfies the following dynamics:
\begin{align*}
    H_t &= -1 &&\forall\ t \leq 0,\\
    H_0 &= 0, \\
    H_t &= \max\left(H_{t-1}, \omega_t\cdot (1 + H_{t - \xi_t})\right) &&\forall\ t > 0,
\end{align*}
and that the instances upon which $H$ increments form a renewal process on $\bZ_+$ with a point at $0$.
In particular, it is shown that the distribution of the inter-renewal duration $R=R_p$ is given by:
\begin{align}\label{eq:Rp}
    \pr(R_p > r) = \prod_{i=1}^{r}\left(1-p + p\pr(\xi > i)\right) \text{ for } r\ge 0,
\end{align}
and that for any $k \ge 1$, $\E R_p^k < \infty$.
The MGF for $R_p$ is also shown to exist in~\cite{dey2022asymptotic}.

Using this, our dynamics can be considered as a queueing process as follows.
We treat increments of the adversarial longest chain as arrivals, and increments of the honest longest chain as services.
The driving variables $(\omega_t)$ partition the process into arrival points and service points: with probability $p$, a point is a service point, and otherwise it is an arrival point.
Then, the arrivals form an intensity $(1-p)$ Bernoulli point process.
The service is distributed as $R_p$.
Note that \emph{the services are not independent of the arrival process}, due to the fact that the delays faced by honest parties are with respect to \emph{all} vertices, not just the honest ones.

Here, the ``service'' continues even if the queue is empty, similar to the unstoppable server of~\cite{haviv2022queueing, boxma2025priorities}.
Thus, the first customer in a busy period receives a service that is the residual distribution of the true service distribution.
Throughout, we will use standard queueing-theoretic language without defining it.

Letting $A_t$ be the height of the longest adversarial vertex at time $t$, the height of the adversarial path evolves as follows: $A_t=-1$ for $t\le 0$ and 
\begin{align*}
    A_t &= A_{t-1} + 1-\omega_t, \quad t > 0.
\end{align*}
Observe then that the time to consensus (if it exists) is given by the last passage of $H_t-A_t$ to $\bZ_-$, where the dynamics are according to the following Lindley recursion:
\begin{align*}
    A_t &= A_{t-1} + 1 - \omega_t, \\
    H_t &= \max\left(H_{t-1}, \omega_t\cdot (1 + H_{t - \xi_t})\right), \\
    Q_t &= \max(A_t - H_t, -1).
\end{align*}

The reason that we take the maximum with $-1$ in the definition of $Q_t$ is that if a transaction occurs at some height $h$ that the adversary wishes to attack, the adversary must also place a block at height $h$.
The queue has a stationary distribution whenever $(1-p)\E R_p < 1$; stability of the queue is clearly equivalent to the time to consensus being \emph{a.s.}~finite. 

To see this first, we embed the discrete--time process into a unit-intensity Poisson point process. Note that when $(1-p)\E R_p < 1$, the expected number of arrivals during a service is smaller than $1$.
This establishes the existence of the stationary distribution at the epochs of departures, as is standard in the analysis of the $M/G/1$ queue.
Since $Q$ only increments by values of $-1, 0, 1$, it follows that the distribution of the queue length seen by an arriving customer is equal to that left behind by a departing customer.
The final result follows by PASTA.

\subsection{Model Characterization via Queue Cycles and Main Result}
We will characterize the time to consensus in terms of various functionals of the cycles of the queue.
To do this, we denote by the subscript $(n)$ any functional of the $n$-th cycle.
Let $Y_{(n)}$ be the maximum queue length in the $n$-th cycle, and let $X_{(n)}$ be the number of pseudo-service completions in the $n$-th empty period.
Notice that the number of queue cycles required to the time to consensus is invariant to whether or not the vertex in question was at the start of the busy period or not.

Denote by $S_{(n)} := \sum_{i=1}^{n}X_{(i)}$.
We have that $(X_{(i)})_i$ and $(Y_{(i)})_i$ are independent \emph{i.i.d.}~sequences and that $X_{(1)}$ is a geometric random variable; we will compute its parameter later.
It is important to note that $(X_{(i)})_i$ and $(Y_{(i)})_i$ are independent even though the arrivals and services are dependent.
Finally, we consider the last passage of the following process to $\bZ_-$,
$$B_{(n)} := S_{(n)} - Y_{(n)}.$$
A pictorial representation of the queueing cycle description of our dynamics is in Figure~\ref{fig:BRW}.
Our main results are as follows.

\begin{lemma}\label{lem:p_c}
    There is a unique solution $p_c$ in $(0, 1)$ to the equation $(1-p)\E R_{p} = 1$.
    The time to consensus is a.s.~finite when $p > p_c$.
\end{lemma}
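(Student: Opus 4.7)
The plan is to define $f(p) := (1-p)\,\E R_p$ on $(0,1)$ and show that $f$ is finite-valued, continuous, strictly decreasing, with $\lim_{p\to 0^+}f(p)=\infty$ and $\lim_{p\to 1^-}f(p)=0$; then the intermediate value theorem plus strict monotonicity furnish a unique $p_c\in(0,1)$ with $f(p_c)=1$. The second assertion (a.s.\ finiteness of $\tau_C$ for $p>p_c$) then drops out of monotonicity together with the stability characterization ``$(1-p)\E R_p<1\iff \tau_C<\infty$ a.s.'' already established in the paragraph preceding the lemma.

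The starting point is the identity
\[
\E R_p \;=\; 1+\sum_{r\ge 1}\prod_{i=1}^{r}\bigl(1-p\,\pr(\xi\le i)\bigr),
\]
obtained by summing the survival function in \eqref{eq:Rp} and rewriting $1-p+p\pr(\xi>i)=1-p\,\pr(\xi\le i)$. Since $\xi$ is $\bN$-valued with finite mean, $\xi$ is a.s.\ finite, so $\pr(\xi\le i)>0$ for all sufficiently large $i$. Each factor $1-p\,\pr(\xi\le i)$ is therefore a (weakly, and eventually strictly) decreasing function of $p$, making each summand monotone nonincreasing in $p$ and at least one strictly decreasing. This yields strict monotonicity of $\E R_p$ and hence of $f$. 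Finiteness of $\E R_p$ for every $p\in(0,1)$ is already recorded in the paper (all moments of $R_p$ exist).

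For continuity, I would apply dominated convergence on a compact subinterval $[p_0,1-p_0]\subset(0,1)$, using the summable envelope $\prod_{i=1}^{r}(1-p_0\,\pr(\xi\le i))$. For the boundary behavior, monotone convergence gives $\prod_{i=1}^r(1-p\,\pr(\xi\le i))\uparrow 1$ as $p\downarrow 0$, hence $\E R_p\to\infty$ and $f(p)\to\infty$; while for $p\uparrow 1$, monotonicity of $\E R_p$ gives $\E R_p\le \E R_{1/2}<\infty$, so $f(p)=(1-p)\E R_p\to 0$. Together these show $f\colon(0,1)\to(0,\infty)$ is a continuous, strictly decreasing bijection, so a unique $p_c$ with $f(p_c)=1$ exists; and for $p>p_c$, strict monotonicity yields $f(p)<1$, which by the pre-lemma equivalence gives $\tau_C<\infty$ a.s.

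I do not anticipate a serious obstacle: the argument is a routine monotonicity-plus-IVT computation, and the only mild subtlety is verifying strict decrease at the level of individual factors, which uses only that $\xi$ is a.s.\ finite (guaranteed by $\xi\in\bN$ with finite mean). The stability equivalence itself has already been argued via the $M/G/1$-style PASTA reduction in the text immediately above the lemma.
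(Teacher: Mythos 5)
Your proposal is correct and follows essentially the same route as the paper: the key ingredient in both is the factor-wise monotonicity of $\pr(R_p>r)=\prod_{i=1}^r\bigl(1-p\pr(\xi\le i)\bigr)$ in $p$, which gives monotonicity of $\E R_p$ and hence of $f(p)=(1-p)\E R_p$. The paper's own proof is terser — it records only the monotonicity of $\E R_p$ and then asserts ``the result follows'' — whereas you explicitly supply the continuity (dominated convergence on compact subintervals), the boundary limits $f(0^+)=\infty$ and $f(1^-)=0$, the strictness of the decrease (via a.s.\ finiteness of $\xi$), and the IVT step, all of which the paper implicitly leaves to the reader. Nothing in your argument is wrong or superfluous, and the invocation of the pre-lemma $M/G/1$ stability equivalence for the second assertion matches the paper's intent.
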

\begin{proof}
The lemma follows by showing the monotonicity of $\E R_p$ in $p \in (0, 1)$.
Indeed, if $p > q$, then 
\begin{align*}
    \pr(R_p > r) &= \prod_{i=1}^{r}\biggl((1 - p)\pr(\xi \le i) +\pr(\xi > i)\biggr)\\& \leq \prod_{i=1}^{r}\biggl((1 - q)\pr(\xi \le i) +\pr(\xi > i)\biggr) = \pr(R_q > r),
\end{align*}
for any $r \geq 0$, and the result follows.
\end{proof}

\begin{theorem}\label{thm:cycle-tail-decay}
Assume that the delay~$\xi$ satisfies $\E \xi < \infty$. For $p\in (p_c,1)$, let $z_*=z_*(p)\in ((1-p)/p,1)$ be the unique solution to the equation
\[
    \sum_{r=1}^\infty \prod_{i=1}^r\left(\frac{1-p}{z_*}+p\cdot \pr(\xi>i) \right) = \frac{p}{1-p}.
    \]
    and let $j_0 := \E p^{R_1}$.
    Let $T$ be the time index of the last passage time of $B_{(n)}$ to $\bZ_-$. Then 
    \[
    C_1\cdot  \gc^{t} \leq \pr(T \geq t) \leq C_2\cdot \gc^{t},\qquad t\ge 1
    \]
    for some constants $0<C_1<C_2<\infty$ where 
    \[
    \gc := \frac{1-j_0\ }{1-j_0\cdot z_*}<1. 
    \]
\end{theorem}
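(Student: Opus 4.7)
The plan is to reduce the theorem to two building blocks—the exact law of $X_{(1)}$ and a two-sided geometric tail for $Y_{(1)}$—and then combine them via a short one-cycle computation, with the upper bound coming from a union bound and the lower bound from a single term. For $X_{(1)}$, note that an empty period is a run of consecutive honest steps (probability $p$ each) before the next arrival, during which the inter-renewal law of $H$ collapses from $R_p$ to $R_1$ (since no arrivals occur within it). A single service attempt of length $R_1$ completes before any arrival with probability $\E p^{R_1}=j_0$, and $X_{(1)}$ counts the consecutive successful completions before the one that is interrupted by the arrival ending the empty period. Hence $X_{(1)}\sim\mathrm{Geom}_0(1-j_0)$, so $\E z_*^{X_{(1)}}=(1-j_0)/(1-j_0 z_*)=\gc$.

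The main ingredient is the tail bound $c\,z_*^{k}\leq \pr(Y_{(1)}\geq k)\leq C\,z_*^{k}$. Sampling $Q$ at service-completion epochs yields a skip-free-downward Markov chain with step distribution $N-1$, where $N$ is the number of Bernoulli-$(1-p)$ arrivals during one inter-renewal period of $R_p$; its PGF is
\[
f(z)\;=\;\sum_{r\geq 0} p\,\pr(\xi\leq r+1)\,\prod_{i=1}^{r}\bigl((1-p)z+p\,\pr(\xi>i)\bigr).
\]
A telescoping manipulation using $p\,\pr(\xi\leq r+1)=p+(1-p)z-a_{r+1}(z)$ with $a_i(z):=(1-p)z+p\,\pr(\xi>i)$ gives the identity $f(z)=1+(1-p)(z-1)T(z)$, where $T(z):=\sum_{r\geq 0}\prod_{i=1}^{r}a_i(z)$. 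The defining equation for $z_*$ in the theorem is then exactly $T(1/z_*)=1/(1-p)$, which is equivalent to $f(1/z_*)=1/z_*$, so $1/z_*$ is the unique nontrivial fixed point of the convex PGF $f$ in $(1,\infty)$ (existence and uniqueness follow from $f'(1)=(1-p)\E R_p<1$, i.e.\ stability). A standard harmonic-function / gambler's-ruin computation on the embedded chain then yields geometric tails at exactly rate $z_*$; the residual distribution of the first service of a busy period only affects the constants and can be absorbed by coupling with an equilibrium-initialised auxiliary queue.

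Given these ingredients, the independence of the sequences $(X_{(i)})$ and $(Y_{(i)})$ (which follows from the cycle structure established in Dey--Gopalan~\cite{dey2022asymptotic}) allows the one-cycle calculation
\[
\pr\bigl(S_{(n)}\leq Y_{(n)}-1\bigr)\;=\;\sum_{k\geq 0}\pr(S_{(n)}=k)\,\pr(Y_{(n)}\geq k+1),
\]
which, combined with the bounds on $\pr(Y_{(n)}\geq k+1)$ and the identity $\E z_*^{S_{(n)}}=\gc^{n}$, yields $c_1\,z_*\gc^{n}\leq \pr(B_{(n)}\leq -1)\leq C_1\,z_*\gc^{n}$ (the lower bound absorbs the negligible mass on $\{S_{(n)}<k_0\}$ because $\pr(S_{(n)}<k_0)$ decays faster than $\gc^{n}$). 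The upper bound $\pr(T\geq t)\leq C_2\gc^{t}$ then follows from a geometric-series sum over $n\geq t$, while the lower bound $\pr(T\geq t)\geq C_1\gc^{t}$ is immediate from $\pr(T\geq t)\geq \pr(B_{(t)}\leq -1)$. The main technical obstacle is the sharp two-sided tail for $Y_{(1)}$ with exact rate $z_*$: the dependence between arrivals and service durations, together with the residual first-service distribution at the start of a busy period, complicates the classical $M/G/1$ embedded-chain argument, and controlling these through the coupling while preserving matching constants is where the real work lies.
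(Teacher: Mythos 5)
Your proposal follows essentially the same route as the paper: decompose into queue cycles, observe $X_{(1)}\sim\mathrm{Geom}_0(1-j_0)$ so that $\E z_*^{S_{(n)}}=\gc^n$, establish the two-sided geometric tail $c\,z_*^{k}\leq \pr(Y\geq k)\leq C\,z_*^{k}$, and then combine via $\E\,h(S_{(n)})$ with a union bound over $n\geq t$ for the upper estimate and a single term for the lower. Your derivation of the fixed point $f(1/z_*)=1/z_*$ via the telescoping identity $f(z)=1+(1-p)(z-1)T(z)$ is a tidy alternative packaging of the paper's Lemma~4.3 (which defines $z_*$ directly as the root of $\E z^{-(J_p-1)}=1$), but the two conditions are equivalent; and the step you explicitly defer --- controlling the residual first service and the overshoot at the upper barrier so the constants match --- is exactly what the paper fills in with the martingale $z_*^{-Q_{t_n\wedge\tau}}$, optional stopping, the closed form $\E(z_*^{-(J-1)}\mid J\geq 1)=1+\tfrac{j_0}{1-j_0}(1-z_*)$, and the uniform overshoot bound $\sup_n\E(z_*^{-(J-n)}\mid J\geq n)<\infty$ coming from the geometric tail of $J$.
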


The quantity $j_0$ in the statement of Theorem~\ref{thm:cycle-tail-decay} can be interpreted as follows.
By embedding our dynamics into a Poisson process, the arrivals to the queue are Poisson.
Letting $J$ denote the distribution of the number of arrivals during a single service, as is standard in the analysis of $M/G/1$, we have $j_0 = \pr(J = 0)$.

\begin{remark}
    Technically, we do not use the moment assumption on $\xi$ to prove Theorem~\ref{thm:cycle-tail-decay} as $R$ has finite moments of all orders independently of the distribution of $\xi$.
    Nevertheless, that moment assumption is required for the blockchain data structure to be well-behaved, so we require it anyway~\cite{dey2022asymptotic}. When $\xi\equiv 1$, \ie\ there is no delay, we get $p_c=1/2$ and $z_\star=(1-p)/p$.
\end{remark}

\begin{figure}[htbp]
    \centering
    \includegraphics[width=0.8\linewidth]{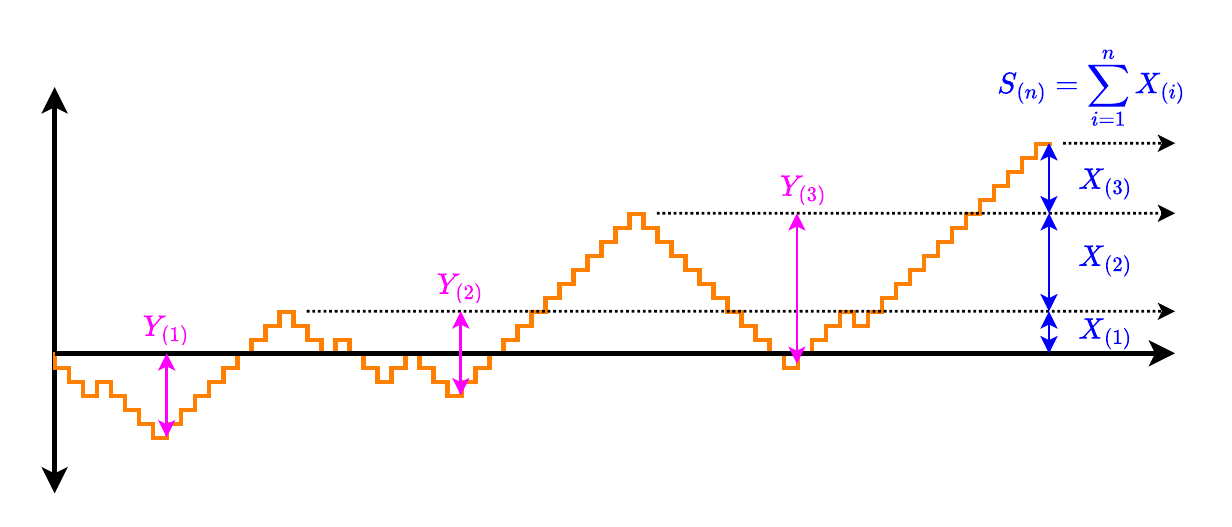}
    \caption{Queueing Cycle Behavior of our Model}
    \label{fig:BRW}
\end{figure}

In this work, we do not explicitly compute the distribution of the time to consensus in the original time scale of the stochastic growth process model.
Instead, we exploit a certain queueing structure within the model and obtain tail bounds on the number of (i.i.d.) cycles of the queue required to reach the time to consensus.
It is \emph{not} an immediate computation to determine properties of the time to consensus in the original (non-queue-cycle) timescale from our result, as conditioned on having passed the time to consensus, the structure of the queue cycles is no longer independent of the number of queue cycles required.

\section{Distributions of $R, J$, $X$, and $Y$}
While the empty ``queue'' in our dynamics corresponds to state $-1$, we will treat an empty queue as having $0$ customers in this section.
Hence, if the adversary's lead is given by $y$, that corresponds to the queue having $y+1$ customers.

By embedding our discrete-time process into a Poisson process, it is easily seen that the queue length process is Markov when sampled at (just after) departure epochs.
Our techniques are similar to the standard analysis of $M/G/1$, and we (re)-define the following:
\begin{itemize}
    \item $R$ is the service time distribution.
    \item $J$ is the distribution of the number of arrivals during a single service.
    \item $X$ is the distribution of the number of ``service completions'' that occur during an empty period.
    \item $Y$ is the distribution of the maximum queue length during a busy period.
\end{itemize}

We will restate the tail distribution of $Y$ at the end of this section in terms of our blockchain process instead of in the translated queueing process.

\subsection{Distributions of $R$ and $J$}
Recall that the service distribution $R_p$ satisfies:
$$
\pr(R_p > r) = \prod_{i=1}^{r}\bigl(1-p+p\pr(\xi> i)\bigr)= \prod_{i=1}^{r}\bigl(1-p\pr(\xi \le i)\bigr) \text{ for } r\ge 0.
$$
In particular, we have the following.
\begin{lemma}
    The random variable $R_p$ satisfies the following stochastic domination relationship:
    \[
\geom(1-p) 
\preccurlyeq R_p 
\preccurlyeq
\geom(1-p) + r_p
\]
where 
\[
r_p:=\lceil p \E(\xi -1)/\abs{(1-p)\log(1-p)}\rceil.
\]
Moreover, $R_p$'s are stochastically decreasing in $p$.
\end{lemma}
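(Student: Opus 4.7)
The plan is to analyze $\pr(R_p > r) = \prod_{i=1}^{r}\bigl(1-p\pr(\xi\le i)\bigr)$ directly and compare its tail to $(1-p)^r$, which is the tail $\pr(\geom(1-p) > r)$ (the relevant convention is the one in which $\geom(1-p)$ has $\pr(G>r) = (1-p)^r$, as this is forced by the case $\xi\equiv 1$ where $R_p\equiv \geom(p)$ in the usual sense and the product collapses to $(1-p)^r$).

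For the lower bound $\geom(1-p)\preccurlyeq R_p$, I would use the trivial estimate $\pr(\xi\le i)\le 1$, so that each factor is at least $1-p$; multiplying gives $\pr(R_p>r)\ge(1-p)^r$, establishing the stochastic ordering.

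For the upper bound $R_p\preccurlyeq \geom(1-p)+r_p$, the natural move is to factor out $(1-p)^r$ and write
\[
\pr(R_p > r) = (1-p)^r\prod_{i=1}^{r}\left(1+\frac{p\,\pr(\xi>i)}{1-p}\right).
\]
Applying $\log(1+x)\le x$ and the identity $\sum_{i\ge 1}\pr(\xi>i)=\E\xi-1$ (valid since $\xi\ge 1$), the product is bounded by $\exp\!\left(\frac{p\,\E(\xi-1)}{1-p}\right)$. The choice of $r_p$ is engineered exactly so that $(1-p)^{r_p}\le \exp\!\left(-\frac{p\,\E(\xi-1)}{1-p}\right)$; taking logarithms and using $|\log(1-p)|=-\log(1-p)$ reduces this to $r_p\ge \frac{p\,\E(\xi-1)}{|(1-p)\log(1-p)|}$, which holds by the ceiling in the definition. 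Hence $\pr(R_p>r)\le (1-p)^{r-r_p}=\pr(\geom(1-p)+r_p>r)$ for $r\ge r_p$ (and the inequality is trivial for smaller $r$).

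For the monotonicity in $p$, each factor $1-p\pr(\xi\le i)$ is a non-negative decreasing function of $p$, so the entire product $\pr(R_p>r)$ is decreasing in $p$ at every fixed $r$, which is exactly stochastic decrease. The only piece requiring any real thought is the calibration of $r_p$ in the upper bound; all other pieces are immediate consequences of the product formula and the elementary inequalities $\pr(\xi\le i)\le 1$ and $\log(1+x)\le x$.
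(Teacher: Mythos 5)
Your proof is correct and is essentially the paper's argument, just with the intermediate steps spelled out: the paper compresses the same chain $(1-p)^r \le \pr(R_p>r) \le (1-p)^r e^{p\E(\xi-1)/(1-p)} \le (1-p)^{r-r_p}$ into one display, and the monotonicity in $p$ is likewise observed factor-by-factor. Nothing is missing, and the calibration of $r_p$ is exactly the step the paper leaves implicit that you have made explicit.
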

\begin{proof}
Clearly,  for all $r\ge 0$, we have
\[
 (1-p)^r \le \pr(R_p > r) \le (1-p)^r \cdot \exp(p/(1-p)\cdot \E(\xi -1)) \le (1-p)^{r-r_p}
\]
and $\pr(R_p>r)\le \pr(R_{p'}>r)$ when $p>p'$. The proof follows. 
\end{proof}

Let $J=J_p$ denote the number of arrivals during a service period. Let 
\[
\theta_i(p):=\frac{(1-p)\cdot \pr(\xi \le i)}{1-p\pr(\xi \le i)} \in [0,1]\text{ for } i\ge 1.
\]
Note that we get 
\[
\E\left(\sum_{i=1}^{ R_p-1} \theta_i(p)\right) = \frac{1-p}{p}.
\]
by Fubini and differentiating w.r.t.~$p$ the relation
\[
\sum_{r=1}^\infty \pr(\xi \le r)\prod_{i=1}^{r-1}\bigl(1-p\pr(\xi \le i)\bigr) = 1/p.
\]
\begin{lemma}
    Conditionally on $R_p=r, r\ge 1$, we have 
\begin{align}
(J \mid R_p = r)\equald \sum_{i=1}^{r-1} (1-\eta_i);
\label{eq:lem4.2RHS}
\end{align}
in particular, we have
\[
R_p-1 - \sum_{i=1}^\infty \eta_i \preccurlyeq J_p \preccurlyeq R_p-1,
\]
where $\eta_i\sim\ber(p(1-\theta_i)), i\ge 1$ are independent of each other and of $R_p$. Moreover, for $z\in (1-p,\infty)$, we have
\[
\E z^{-J_p} = \E (\hat{p}/p)^{-R_{\hat{p}}} <\infty,
\]
for $z>1-p$ where $\hat{p}:=pz/(1-p+pz).$
\end{lemma}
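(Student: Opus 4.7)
The plan is to prove the three claims---conditional distribution, stochastic domination, and transform identity---in order, reducing each to the next via elementary conditional--probability and coupling arguments, with the real content concentrated in the last claim.

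For the conditional distribution, I would fix $r\ge 1$, condition on $\{R_p = r\}$, and note that this event is the intersection of ``no honest increment at steps $1,\ldots,r-1$'' and ``an honest increment at step $r$''. Since the driving pairs $(\omega_t,\xi_t)_{t\ge 1}$ are i.i.d., this conditioning leaves the first $r-1$ pairs independent across $t$, each conditioned on the step-dependent non-increment event $\{\omega_t = 0\}\cup\{\omega_t = 1,\xi_t > t\}$ of probability $1-p\pr(\xi\le t)$. Setting $\eta_t:=\ind_{\omega_t = 1}$, a short calculation identifies its conditional probability as $p\pr(\xi>t)/(1-p\pr(\xi\le t)) = p(1-\theta_t(p))$, matching the definition in the lemma. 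Since $1-\eta_t$ is then the indicator of an arrival at step $t$, summing over $t$ yields \eqref{eq:lem4.2RHS}. For the stochastic domination, I would build the natural coupling: sample $R_p$, and independently sample an i.i.d.\ sequence $(\eta_i)_{i\ge 1}$ with $\eta_i\sim\ber(p(1-\theta_i))$, then define $J_p := \sum_{i=1}^{R_p-1}(1-\eta_i)$. By the conditional part, this has the correct joint law while making $(\eta_i)$ independent of $R_p$, and pointwise $0 \le \sum_{i=1}^{R_p-1}\eta_i \le \sum_{i=1}^{\infty}\eta_i$ delivers both bounds simultaneously.

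The main content lies in the transform identity. Conditioning on $R_p$ and using the independence above,
\[
    \E z^{-J_p} = \sum_{r\ge 1}\pr(R_p = r)\prod_{i=1}^{r-1}\E z^{-(1-\eta_i)},
\]
and a direct computation simplifies $\E z^{-(1-\eta_i)}$ to $\bigl[(1-p)+pz\pr(\xi>i)\bigr]/\bigl[z(1-p\pr(\xi\le i))\bigr]$. The key algebraic observation is the tilting factorization
\[
    (1-p)+pz\pr(\xi>i) = (1-p+pz)\bigl(1-\hat p\pr(\xi\le i)\bigr),
\]
which follows from $(1-p+pz)\hat p = pz$ and exposes the hidden change of measure from $\ber(p)$ to $\ber(\hat p)$. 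Substituting this together with the explicit form $\pr(R_p = r) = p\pr(\xi\le r)\prod_{i=1}^{r-1}(1-p\pr(\xi\le i))$ cancels all the $1-p\pr(\xi\le i)$ denominators, and using $p/\hat p = (1-p+pz)/z$ collapses the $r$-th summand into $(p/\hat p)^r\pr(R_{\hat p} = r)$; summing gives $\E z^{-J_p} = \E(p/\hat p)^{R_{\hat p}} = \E(\hat p/p)^{-R_{\hat p}}$. Finiteness for $z>1-p$ then follows from the preceding lemma's stochastic bound $R_{\hat p}\preccurlyeq \geom(1-\hat p) + r_{\hat p}$ applied with $a = p/\hat p$: since $(p/\hat p)(1-\hat p) = (1-p)/z < 1$ exactly when $z > 1-p$, the corresponding geometric MGF converges.

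The main obstacle---though in hindsight an observation more than a difficulty---is noticing the tilting factorization above. Conceptually, the identity is saying that weighting the arrival/service pattern by $z^{-(\#\,\text{arrivals})}$ is an exponential change of measure on $(\omega_t)$ that maps $\ber(p)$ to $\ber(\hat p)$; once phrased this way, all cancellations are forced and the rest is routine bookkeeping.
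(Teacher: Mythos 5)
Your argument is correct and mirrors the paper's proof step by step: the same conditioning on $\{R_p=r\}$ to identify $\eta_i\sim\ber(p(1-\theta_i))$, the same coupling for the stochastic bounds, and the same algebraic manipulation for the transform (your factorization $(1-p)+pz\pr(\xi>i)=(1-p+pz)\bigl(1-\hat p\pr(\xi\le i)\bigr)$ is precisely what the paper uses implicitly in passing from the second to the third displayed line to recognize $\E(\hat p/p)^{-R_{\hat p}}$). One minor slip: the $(\eta_i)$ are independent but not identically distributed---each has a different parameter $p(1-\theta_i)$---so ``i.i.d.\ sequence'' should read ``independent sequence''.
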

\begin{proof}
    One can think about the effect of the conditioning $(R=r)$ as forcing the following: for the $i$-th variables we have either $\go_i=0$ or $\go_i=1, \xi_i>i$ for all $i=1,2,\ldots,r-1$, and $\go_r=1,\xi_r\le r$. Note that the  RHS of~\eqref{eq:lem4.2RHS} is stochastically increasing in $r$ and 
    \[
    1-p+p\theta_i = \frac{1-p}{1-p\pr(\xi\le i)} = \pr(\go_i=0\mid \go_i=0\text{ or } \go_i=1, \xi_i>i).
    \]

For $z\in (1-p,\infty)$
\begin{align*}
    \E z^{-J_p} &= \E\prod_{i=1}^{R-1} \frac{(1-p)z^{-1} +p\pr(\xi>i)}{1 - p\pr(\xi\le i)}\\
     &= \sum_{r=1}^\infty z^{1-r}\cdot p \pr(\xi \le r)\prod_{i=1}^{r-1}\bigl(1-p+pz\pr(\xi > i)\bigr)\\
     &= \sum_{r=1}^\infty ((1-p+pz)/z)^r\cdot \frac{pz}{1-p+pz} \pr(\xi \le r)\prod_{i=1}^{r-1}\biggl(1-\frac{pz}{1-p+pz}+\frac{pz}{1-p+pz}\cdot \pr(\xi > i)\biggr)\\
     &= \E (\hat{p}/p)^{-R_{\hat{p}}} <\infty,
\end{align*}
where $\hat{p}:=pz/(1-p+pz).$
Note that, for $z=(1-p)/(1-\eps), \eps\in (0,1)$, we have
$\E z^{-J_p}<\infty$ and $\E (1-p)^{-J_p}=\infty.$
\end{proof}

In particular, 
\[
\E(J_p\mid R_p)= (1-p)(R_p-1) + p\sum_{i=1}^{R_p-1}\theta_i,
\]
which implies that $\E J_p=(1-p)\E R_p.$
For stability, we need $(1-p)\E R_p <1$ or $p>p_c$ where $p_c$ is the unique solution of the equation $(1-p)\E R_p =1$.

\begin{lemma}
    For $p\in(p_c,1)$, there exists a unique $z_\star=z_\star(p,\xi)\in ((1-p)/p,1 )$ such that
    \[
    \E z_\star^{-(J_p-1)} =  1.
    \]
    Moreover, $z_*$ satisfies the equation
    \[
    \sum_{r=1}^\infty \prod_{i=1}^r\left(\frac{1-p}{z_*}+p\pr(\xi>i) \right) = \frac{p}{1-p}
    \]
    and satisfies $z_*\le (1-p)/(p\pr(\xi=1)).$
\end{lemma}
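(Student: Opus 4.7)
The plan is to establish the equivalence between $\E z^{-(J_p-1)}=1$ and the stated sum equation via a telescoping identity, and then use strict monotonicity of the resulting sum to deduce existence, uniqueness, and the upper bound.

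First, I would begin with the series representation derived in the proof of the preceding lemma,
\[
\E z^{-J_p} = \sum_{r=1}^{\infty} z^{1-r}\, p\pr(\xi\le r)\prod_{i=1}^{r-1}\bigl(1-p+pz\pr(\xi>i)\bigr),
\]
and factor $1-p+pz\pr(\xi>i) = z\cdot a_i(z)$ where $a_i(z) := (1-p)/z + p\pr(\xi>i)$. Writing $P_r(z) := \prod_{i=1}^{r}a_i(z)$ with $P_0=1$, this becomes $\E z^{-J_p} = p\sum_{r=0}^{\infty} P_r(z)\pr(\xi\le r+1)$. Since $\E\xi<\infty$ forces $\pr(\xi>i)\to 0$, we have $a_i(z)\to (1-p)/z < 1$ for $z>1-p$, so $P_r(z)\to 0$ and the telescoping identity $\sum_{r=0}^{\infty}(P_r-P_{r+1}) = 1$ holds. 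The key algebraic observation,
\[
1-a_{r+1}(z) \;=\; p\pr(\xi\le r+1) + (1-p)\bigl(1-1/z\bigr),
\]
when summed gives $1 = \E z^{-J_p} + (1-p)(1-1/z)\sum_{r=0}^{\infty}P_r(z)$. Plugging in $\E z^{-J_p} = 1/z$ (equivalent to $\E z^{-(J_p-1)}=1$) and dividing by $1-1/z$ (valid for $z\neq 1$) reduces the transform equation to $\sum_{r=1}^{\infty}P_r(z) = p/(1-p)$, the stated sum equation.

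Second, I would establish existence and uniqueness by studying $F(z) := \sum_{r=1}^{\infty}P_r(z)$, which by monotone convergence is continuous and strictly decreasing in $z$ on $\bigl((1-p)/p,1\bigr]$. At $z=1$, $a_i(1) = 1-p+p\pr(\xi>i)$ and $P_r(1) = \pr(R_p>r)$, so $F(1) = \E R_p - 1$; the hypothesis $p>p_c$ gives $(1-p)\E R_p < 1$, equivalently $F(1) < p/(1-p)$. At $z=(1-p)/p$, $a_i = p(1+\pr(\xi>i)) \ge p$, so $F((1-p)/p) \ge \sum_{r\ge 1}p^r = p/(1-p)$ (with equality only in the degenerate case $\xi\equiv 1$ noted in the remark). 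The intermediate value theorem then yields a unique $z_\star \in ((1-p)/p,1)$ with $F(z_\star) = p/(1-p)$.

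Third, for the bound I would evaluate $F$ at $z = (1-p)/(p\pr(\xi=1))$. Here $(1-p)/z = p\pr(\xi=1)$, so for each $i\ge 1$,
\[
a_i \;=\; p\pr(\xi>i) + p\pr(\xi=1) \;\le\; p\pr(\xi\ge 1) \;=\; p,
\]
using disjointness of $\{\xi>i\}$ and $\{\xi=1\}$ for $i\ge 1$ together with $\xi\ge 1$ a.s. Hence $F\bigl((1-p)/(p\pr(\xi=1))\bigr) \le \sum_{r\ge 1}p^r = p/(1-p)$, and strict monotonicity of $F$ forces $z_\star \le (1-p)/(p\pr(\xi=1))$. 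The main obstacle is the algebra in Step~1: one must justify $P_r(z)\to 0$ on the whole interval $((1-p)/p, 1]$ so that the telescoping is legitimate, and handle the removable division by $1-1/z$ at $z=1$ (which simply reflects that $z=1$ is the trivial root of $\E z^{-(J_p-1)}=1$, while the sum equation isolates the nontrivial root $z_\star$).
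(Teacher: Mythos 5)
Your proof is correct, but it takes a genuinely different route from the paper's. The paper establishes existence and uniqueness via convexity of the cumulant generating function $\lambda\mapsto\log\E e^{\lambda(J_p-1)}$: this function is convex, vanishes at $\lambda=0$, has strictly negative derivative there (equal to $\E(J_p-1)<0$ when $p>p_c$), and diverges at the boundary of its finiteness domain, so it has a unique nontrivial zero $\lambda_*$; then $z_*=e^{-\lambda_*}$, the sum equation follows by expanding the expectation, and the inclusion $z_*\in((1-p)/p,\min(1,(1-p)/(pq)))$ is extracted by rewriting the sum equation in terms of $t_*=(1-p)/(pz_*)$ and noting that $t\mapsto\sum_r(1-p)p^{r-1}\prod_{i\le r}(t+\pr(\xi>i))$ is increasing with value $1$ at $t_*$, $\ge 1$ at $t=1$, and $\le 1$ at $t=q$. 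You instead bypass the CGF entirely: a telescoping identity $1=\E z^{-J_p}+(1-p)(1-1/z)\sum_{r\ge0}P_r(z)$ shows directly that the transform equation $\E z^{-(J_p-1)}=1$ is equivalent (for $z\ne1$) to the sum equation $F(z)=p/(1-p)$, and you then get existence, uniqueness, and both endpoint bounds in one shot from strict monotonicity of $F$ together with $F(1)=\E R_p-1<p/(1-p)$ and the endpoint evaluations. Your route is more self-contained and elementary (it needs only the IVT, not convexity of a CGF or its boundary behaviour), it incidentally sidesteps a sign typo in the paper's stated range for $\lambda$, and it makes the sum equation the primary object rather than a derived identity; the paper's convexity argument is shorter to state and is the more standard Cramér-type template. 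One small point worth making explicit in your write-up: the equivalence you derive is two-directional, i.e.\ $F(z)=p/(1-p)\Rightarrow\E z^{-(J_p-1)}=1$ as well, which is what licenses transferring uniqueness of the root of $F$ back to the original transform equation. You already flag that $\xi\equiv 1$ is the degenerate boundary case $z_*=(1-p)/p$, consistent with the paper's remark. Overall the argument is sound and complete.
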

\begin{proof}
    The function $\gl\mapsto \log \E\exp(\gl (J_p-1))$ for $0>\gl>\log(1-p)$ is finite, convex, with derivative at zero being $\E(J_p-1)<0$. Thus there is unique solution $\gl_*$ such that $\log \E\exp(\gl_* (J_p-1))=0$. We define $z_*=\exp(-\gl_*)$. Expanding the expectation $\E\exp(\gl_* (J_p-1))$ we get the result
    \[
    \sum_{r=1}^\infty \prod_{i=1}^r\left(\frac{1-p}{z_*}+p\pr(\xi>i) \right) = \frac{p}{1-p}.
    \]
    Rewriting, we get, with $t_*=(1-p)/(pz_*)$, 
    \[
    \sum_{r=1}^\infty (1-p)p^{r-1}\cdot  \prod_{i=1}^r\left(t_*+\pr(\xi>i) \right) = 1.
    \]
    Clearly, $t_*\le 1$, otherwise the lhs will be strictly bigger than $1$.
Thus, $z_\star(p)\ge (1-p)/p$. Similarly, if $q:=\pr(\xi=1)>0$, then $t_* \ge q$. Thus, $z_\star(p)\le \min(1,(1-p)/(pq)).$ 
\end{proof}

\begin{lemma}
    We have for all $n\ge 1$,
    \begin{align*}
       c_p \le \frac{\pr(J\ge n)}{(1-p)^n} \le C_p
    \end{align*}
    for some constants $c_p,C_p$ depending on $p,\xi$.
\end{lemma}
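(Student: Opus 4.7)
The plan is to establish the upper and lower estimates separately; both follow from elementary arguments that leverage results already proved in this section.

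The upper bound is immediate from the stochastic domination $J \preccurlyeq R_p - 1$ proved in Lemma~4.2, combined with the exponential tail bound $\pr(R_p > r) \le (1-p)^{r - r_p}$ from the first lemma of this section. Together these give
\[
\pr(J \ge n) \le \pr(R_p \ge n+1) \le (1-p)^{-r_p}\cdot (1-p)^n,
\]
so the choice $C_p := (1-p)^{-r_p}$ works.

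For the lower bound I will exhibit an explicit event of probability $(1-p)^n$ on which $J \ge n$ holds deterministically. Consider
\[
E_n := \{\omega_1 = \omega_2 = \cdots = \omega_n = 0\}.
\]
On $E_n$ none of the first $n$ time steps attempts to place an honest block, so the dynamics of Section~3 give $H_t = H_0$ for $t = 0, 1, \ldots, n$; consequently no renewal of $H$ can occur in $[1,n]$ and $R_p > n$. Because $J$ counts the $\omega_t = 0$ time steps within a single renewal period (this is precisely the content of the identity $(J \mid R_p = r)\equald \sum_{i=1}^{r-1}(1-\eta_i)$ from Lemma~4.2, where $1-\eta_i$ encodes $\mathbf{1}[\omega_i = 0]$ conditioned on no renewal at time $i$), the event $E_n$ contributes at least $n$ arrivals to $J$, forcing $J \ge n$. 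Hence
\[
\pr(J \ge n) \ge \pr(E_n) = (1-p)^n,
\]
and $c_p := 1$ suffices.

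I anticipate no significant obstacle: the only subtle point is correctly interpreting $J$ on the underlying probability space (rather than via its conditional law given $R_p$), but the explicit description of the renewal process in Section~3 makes it transparent that $J$ equals the number of $t$ with $\omega_t = 0$ within $[1, R_p]$, from which both inequalities follow at once.
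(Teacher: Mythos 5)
Your proof is correct, and both bounds are sound. The upper bound is essentially the paper's argument: $J_p \preccurlyeq R_p - 1$ plus $\pr(R_p > r) \le (1-p)^{r - r_p}$ yields $C_p = (1-p)^{-r_p}$.

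The lower bound is where you genuinely diverge from the paper, and your route is both shorter and sharper. The paper works from the other half of the stochastic-domination chain, $J_p + 1 \succcurlyeq \geom(1-p) - \sum_{i\ge 1}\eta_i$, and evaluates
\[
\pr\Bigl(\geom(1-p) - \textstyle\sum_{i\ge 1}\eta_i > n\Bigr)
= (1-p)^n \cdot \E(1-p)^{\sum_{i\ge 1}\eta_i}
= (1-p)^n \prod_{i=1}^\infty \Bigl(1 - \tfrac{p^2\pr(\xi>i)}{1-p\pr(\xi\le i)}\Bigr),
\]
using $\E\xi < \infty$ to guarantee the infinite product is strictly positive; this yields a constant $c_p < 1$. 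You instead exhibit the explicit event $E_n = \{\omega_1 = \cdots = \omega_n = 0\}$, note it has probability exactly $(1-p)^n$, and observe that on $E_n$ no renewal of $H$ occurs in $[1,n]$ (so $R_p > n$) while every step in $[1,n]$ is an arrival; hence $J \ge n$ deterministically on $E_n$ and $c_p = 1$ works. This is cleaner, gives the optimal constant, and notably does not even invoke $\E\xi < \infty$ for the lower bound. The one point worth tightening is your parenthetical appeal to the conditional law $(J\mid R_p = r)\equald\sum_{i=1}^{r-1}(1-\eta_i)$: that identity is distributional, not pathwise, so it does not by itself license the pathwise claim that $J\ge n$ on $E_n$. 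What actually justifies the step is the definitional fact (stated in Section~4 of the paper) that $J$ is the count of arrival points, i.e.\ indices $t$ with $\omega_t = 0$, within the inter-renewal interval $[1, R_p-1]$; on $E_n$ this interval contains $[1,n]$ and all of those are arrivals. With that minor rewording, the argument is airtight.
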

\begin{proof}
    We use the fact that
    \[
    \geom(1-p) -\sum_{i=1}^\infty \eta_i \preccurlyeq R_p -\sum_{i=1}^\infty \eta_i \preccurlyeq J_p+1 \preccurlyeq R_p \preccurlyeq \geom(1-p) +r_p.
    \]
    Thus, 
    \[
    \pr(J_p\ge n) \le \pr(\geom(1-p) \ge  n+1-r_p) \le (1-p)^{n-r_p}
    \]
    and
    \begin{align*}
        \pr(J_p\ge n) &\ge \pr\left(\geom(1-p) -\sum_{i=1}^\infty \eta_i > n \right)\\
        &=(1-p)^n\cdot \E(1-p)^{\sum_{i=1}^\infty \eta_i}\\
        &= (1-p)^n\cdot \prod_{i=1}^\infty \left(1- \frac{p^2\pr(\xi>i)}{1-p\pr(\xi\le i)}\right)
        \ge (1-p)^n\cdot \prod_{i=1}^\infty \left(1- \frac{p^2}{1-p}\cdot \pr(\xi>i)\right).
    \end{align*}
    The infinite product in the last inequality is finite as $\E\xi=\sum_{i=0}^\infty \pr(\xi>i)<\infty.$
\end{proof}

\begin{corollary}
    We have 
    $
    \sup_n\E(z_\star^{-(J_p-n)}\mid J_p\ge n) <\infty.  
    $
\end{corollary}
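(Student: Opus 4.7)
The plan is to reduce everything to the two-sided geometric tail bound on $J_p$ from the preceding lemma. Writing $\alpha:=z_\star^{-1}>1$, the first step is to expand
\[
\E\!\left(\alpha^{J_p-n}\,\bigm|\, J_p\ge n\right)
=\sum_{k=0}^{\infty}\alpha^{k}\,\pr(J_p=n+k\mid J_p\ge n)
\le \sum_{k=0}^{\infty}\alpha^{k}\,\frac{\pr(J_p\ge n+k)}{\pr(J_p\ge n)},
\]
using only that $\pr(J_p=n+k)\le\pr(J_p\ge n+k)$. The goal is then to show that the ratio in the sum decays geometrically in $k$ at a rate strictly faster than $z_\star$.

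Next, I would invoke the preceding lemma to get constants $c_p,C_p\in(0,\infty)$ with $c_p(1-p)^n\le \pr(J_p\ge n)\le C_p(1-p)^n$ for every $n\ge 1$. Substituting both bounds yields
\[
\frac{\pr(J_p\ge n+k)}{\pr(J_p\ge n)}\le \frac{C_p}{c_p}\,(1-p)^{k},
\]
uniformly in $n$, so the conditional expectation is bounded by $(C_p/c_p)\sum_{k\ge 0}(\alpha(1-p))^{k}$.

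Finally, I would verify that $\alpha(1-p)=(1-p)/z_\star<1$. The previous lemma asserts $z_\star\in((1-p)/p,1)$; since $p<1$ we have $(1-p)/p>1-p$, hence $z_\star>1-p$ and the geometric series converges. This gives the uniform bound $\sup_n\E(z_\star^{-(J_p-n)}\mid J_p\ge n)\le \tfrac{C_p/c_p}{1-(1-p)/z_\star}<\infty$. There is no real obstacle: the whole argument is a one-line consequence of the two-sided tail bound on $J_p$, and the strict inequality $z_\star>1-p$ is precisely what makes the resulting geometric series summable.
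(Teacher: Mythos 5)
Your proof is correct and is essentially the intended one: the paper states this as an immediate corollary of the preceding two-sided tail bound $c_p(1-p)^n \le \pr(J_p\ge n)\le C_p(1-p)^n$ without writing out the details, and your computation—bounding $\pr(J_p=n+k\mid J_p\ge n)$ by $\pr(J_p\ge n+k)/\pr(J_p\ge n)$, applying the two-sided bound to get $(C_p/c_p)(1-p)^k$, and closing with $z_\star>(1-p)/p>1-p$—is exactly the argument being implicitly invoked.
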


\subsection{Distributions of $X$ and $Y$}
Recall that $(X_{(n)})$ and $(Y_{(n)})$ are independent i.i.d. sequences.
It is easily seen that $X$ is geometric with success probability $(1-j_0)$.
We have
\begin{align*}
    j_0 
    &= \sum_{r = 1}^{\infty}\pr(J_p = 0 \mid R_p = r)\cdot \pr(R_p = r) 
    = \sum_{r=1}^{\infty}p^{r} \cdot \pr(\xi \leq r)\prod_{i=1}^{r-1}\pr(\xi > i)=\E p^{R_1}.
\end{align*}
We will bound the tail of $Y$ via a martingale argument.
To do this, we use the standard technique for the $M/G/1$ of considering the values of $Q$ at (just after) the times where $H$ increments.
Notationally, we will use $t_n$ to denote the $n$-th time when $H$ increments.

\begin{proposition}
    Let $\tau := \inf_{n \in \bN}\{n: Q_{t_n} \in \{0, y, y+1, \ldots\}\}$.
    Then the stopped process $z_*^{-Q_{t_n \wedge t_\tau}}$ is a martingale.
\end{proposition}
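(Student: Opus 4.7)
The plan is to verify the martingale property directly from the standard embedded-chain analysis for $M/G/1$, combined with the defining equation $\E z_\star^{-(J_p-1)}=1$ from the previous lemma. First I would observe that at service-completion epochs the embedded chain satisfies the recursion
\[
Q_{t_{n+1}} \;=\; Q_{t_n} \,+\, J^{(n+1)} - 1 \qquad\text{whenever } Q_{t_n}\ge 1,
\]
where $J^{(n+1)}$ is the number of adversarial arrivals during the $(n{+}1)$-st service, drawn from the distribution of $J_p$. The sequence $(J^{(n)})_{n\ge 1}$ is i.i.d.\ and $J^{(n+1)}$ is independent of $\mathcal{F}_{t_n}$: this follows because each service period together with the arrivals it carries is an i.i.d.\ draw across cycles (a consequence of the renewal structure for $(R_p)$ established in~\cite{dey2022asymptotic} and the fact that the Bernoulli arrival driver $(\omega_t)$ is refreshed after each service).

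Next, on the event $\{\tau>n\}$ we have $Q_{t_n}\in\{1,2,\ldots,y-1\}$, so the recursion above applies and the previous lemma gives
\[
\E\!\left[z_\star^{-Q_{t_{n+1}}}\,\big|\,\mathcal{F}_{t_n}\right]\ind_{\tau>n}
\;=\; z_\star^{-Q_{t_n}}\cdot \E z_\star^{-(J_p-1)}\cdot \ind_{\tau>n}
\;=\; z_\star^{-Q_{t_n}}\ind_{\tau>n}.
\]
On the complementary event $\{\tau\le n\}$, the stopped process is frozen, so trivially $z_\star^{-Q_{t_{(n+1)\wedge\tau}}}=z_\star^{-Q_{t_{n\wedge\tau}}}$. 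Combining the two cases yields the one-step martingale identity
\[
\E\!\left[z_\star^{-Q_{t_{(n+1)\wedge\tau}}}\,\big|\,\mathcal{F}_{t_n}\right]
\;=\; z_\star^{-Q_{t_{n\wedge\tau}}}.
\]
Integrability follows inductively: $z_\star^{-Q_{t_{n\wedge\tau}}}$ has the same expectation as $z_\star^{-Q_0}$, which is finite since $Q_0$ is fixed.

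The only genuine subtlety is justifying that $J^{(n+1)}$ is independent of $\mathcal{F}_{t_n}$ despite the fact that, in our model, services and arrivals are correlated \emph{within} a service period (the delay variables $\xi$ affect both). The correlation is confined to each individual service: across services, the i.i.d.\ structure of the driving sequences $(\omega_t,\xi_t)$ and the memorylessness at renewal epochs of $R_p$ decouple $J^{(n+1)}$ from the past. I would state this carefully and cite the renewal result of~\cite{dey2022asymptotic}; no further computation is needed, as the martingale property is then immediate from the two displays above.
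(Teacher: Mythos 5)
Your proof is correct and follows the same route as the paper's one-line argument: on $\{0 < Q_{t_n} < y\}$ write $Q_{t_{n+1}} = Q_{t_n} + J^{(n+1)} - 1$ and invoke $\E z_\star^{-(J_p-1)} = 1$, while off this event the stopped process is frozen. The only addition beyond what the paper records is your careful justification, via the renewal structure of the $H$-increment times, that $J^{(n+1)}$ is independent of $\cF_{t_n}$; this is a genuine subtlety the paper's proof leaves implicit, and your resolution is correct.
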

\begin{proof}
    It suffices to show the martingale property inside the set $(0, y)$.
    Let $\cF_{t_n}$ be the natural filtration of $\{Q_{t_n}\}_{n\in \mathbb{N}}$.
    Indeed, if $Q_{t_n} \in (0, y)$, we have
    $
        \E(z_*^{-Q_{t_{n+1}}} \mid \cF_{t_n}) 
        = z_*^{-Q_{t_n}}\E z_*^{-(J-1)} = z_*^{-Q_{t_n}}.
    $
\end{proof}

Let $p_y$ be the probability that the queue length in a busy period reaches at least $y$.
By the optional stopping theorem, we have:
\begin{align*}
    \E\left(z_*^{-(J-1)} \mid J \geq 1\right) 
    &= 1-p_y + p_y\E\left( z_*^{-Q} \mid Q \geq y\right).
\end{align*}
Here, the left-hand side of the equation is the conditional expectation of the queue length given that a busy period has actually started.
We handle the two conditional expectations separately.

\subsubsection{$\E\left(z_*^{-(J-1)} \mid J \geq 1\right)$}
We have that 
\begin{align*}
    \E\left(z_*^{-(J-1)} \mid J \geq 1\right) &= \frac{z_*}{1-j_0}\cdot (\E z_*^{-J} - j_0) 
    = \frac{1-z_*j_0}{1-j_0} = 1 + \frac{j_0}{1-j_0}\cdot  (1-z_*).
\end{align*}

\subsubsection{$\E\left(z_*^{-Q} \mid Q \geq y\right)$}

Since $0 < z_* < 1$, we have that $z_*^{-y} \leq \E\left(z_*^{-Q} \mid Q \geq y\right)$.
We can get an upper bound as follows.
Suppose that at time $t_{\tau - 1}$, we are at some state $0 < s < y$.
\begin{align*}
    \E\left(z_*^{-Q_{t_\tau}} \mid Q \geq y, Q_{t_{\tau -1}} = s\right) 
    &= z_*^{-s}\E\left(z_*^{-J-1} \mid J \geq y-s+1 \right) \\
    &= z_*^{-s}\E\left(z_*^{-J} \mid J \geq y-s \right) \\
    & = z_*^{-y}\E\left(z_*^{-(J - (y-s))} \mid J \geq y - s\right) \\
    &= z_*^{-y}\E(z_*^{-(J - (y-s))} \mid J \geq y-s) \\
    & \leq z_*^{-y}\sup_{n \in \bN}\E(z_*^{-(J-n)} \mid J \geq n) 
    = Cz_*^{-y},
\end{align*}
for some $1 < C < \infty$.

\subsubsection{Tail bounds for $Y$}
We have:
\begin{align*}
    p_y &= \frac{\E(z_*^{-(J-1)} \mid J \geq 1) - 1}{\E(z_*^{-Q} \mid Q \geq y) - 1},
\end{align*}
and thus, from the optional stopping theorem:
\begin{align*}
    \frac{j_0}{1 - j_0}(1 - z_*)\cdot \frac{1}{Cz_*^{-y} - 1} \leq p_y \leq \frac{j_0}{1 - j_0}(1 - z_*)\cdot \frac{1}{z_*^{-y} - 1},
\end{align*}
for some constant $C > 1$.
Thus, for appropriate constants $C_1, C_2$, we have that $$C_1 z_*^{y} \leq p_y \leq C_2 z_*^{y}.$$

Recall that we are actually interested in the probability $p_{y+1}$.
From the preceding analysis, this can be easily absorbed into the constants $C_1$ and $C_2$.

\section{Proof of Theorem~\ref{thm:cycle-tail-decay}: Last Passage Time}
Let $S_{(n)} := \sum_{i=1}^{n}X_{(n)}$.
The \emph{last passage time} $T$ of $S_{(n)}-Y_{(n)}$ to $\bZ_-$ is defined as the a.s.~finite random variable $$T := \max_{t \in \bN}\{t: S_{(t)} - Y_{(t)} \leq 0\}.$$
Notice that
\begin{align*}
    \{T > t^*\} = \bigcup_{t \geq t^*}\{S_{(t)}\le Y_{(t)}\}.
\end{align*}
Thus, using the union bound, we get
\begin{align*}
    \pr(Y_{(t^*)} 
    \geq S_{(t^*)}) \leq \pr(T>t^*)
    \leq \sum_{t\ge t^*} \pr(Y_{(t)} 
    \geq S_{(t)}).
\end{align*}
Define $$h(x):=\pr(Y\ge x).$$ Using independence of $Y_{(t)}$ and $S_{(t)}$, we get
\begin{align*}
    \E h(S_{(t^*)}) \leq  \pr(T > t^*) \leq \sum_{t = t^*}^{\infty}\E h(S_{(t)}).
\end{align*}
Notice that
\begin{align*}
    \{T>t^*\} = \cup_{t=t^*}^\infty\{Y_{(t)} \ge S_{(t)}\} \supseteq \{Y_{(t)} \ge S_{(t)}\},\ \forall t\ge t^*
\end{align*}
so the union bound yields 
\[ 
\pr(T>t^*) \le \sum_{t=t^*}^\infty \pr(Y_{(t)} \ge S_{(t)})=\sum_{t=t^*}^\infty \E h(S_{(t)}),
\]
and we also have 
\[
\pr(T>t^*) \ge \sup_{t\ge t^*}\pr(Y_{(t)} \ge S_{(t)}) = \sup_{t\ge t^*} \E h(S_{(t)}).
\]

We now use the fact that for some constants $C_1, C_2$, $C_1 z_*^{y} \leq h(y) \leq C_2 z_*^{y}$.
For the lower bound, we have:
\begin{align*}
    \E h(S_{(t^*)}) &= \E h\left(\sum_{i=1}^{t^*}X_{(i)}\right)
    \geq C_1 \left(\E z_*^{X_{(1)}}\right)^{t^*}.
\end{align*}
For the upper bound, we have:
\begin{align*}
    \sum_{t = t^*}^{\infty}{\E\left[h(S_{(t)})\right]} &= \sum_{t = t^*}^{\infty}{\E\left[h\left(\sum_{i=1}^{t}X_{(i)}\right)\right]}  \leq C_2 \frac{\left(\E z_*^{X_{(1)}}\right)^{t^*}}{1 - \left(\E z_*^{X_{(1)}}\right)} = C_3\left(\E z_*^{X_{(1)}}\right)^{t^*}.
\end{align*}
Finally, we note that $\E z^{X_{(1)}}=\frac{1-j_0}{1-j_0\cdot z}$ where $j_0=\pr(J_p=0)$. \qed

\subsection{Further Comments}
In this section, we show that the true decay is closer to the union bound estimate for the upper bound in the previous section than to the lower bound.
For fixed $k > 0$, we have:
\begin{align*}
    \{T>t^*\} & = \bigcup_{t=t^*}^\infty\{Y_{(t)} \ge S_{(t)} \}\\
    & = \left(\bigcup_{t=t^*}^{t=t^*+k-1}\{Y_{(t)} \ge S_{(t)} \} \right)\cup \{T>t^*+k\},
\end{align*}
and using the inclusion-exclusion formula
\begin{align*}
    \pr(T>t^*) & = \sum_{l=1}^{k+1} \left((-1)^{l-1} \sum_{\substack{I \subseteq \{1,2,\dotsc,k+1\}\\ |I|=l} }\pr\left(\bigcap_{i\in I} A_i\right)\right),
\end{align*}
where 
\begin{align*}
    A_i = \begin{cases}
    \{Y_{(t^*+i-1)} \ge S_{(t^*+i-1)} \} & \text{if } 1\leq i \leq k \\
    \{T>t^*+k\} & \text{if } i=k+1
    \end{cases}
\end{align*}

Now, the probabilities of the intersections can be ``evaluated" by iterated conditioning. For example, for $j>i$,
\begin{align*}
    \pr(\{Y_{(t^*+i)}\ge S_{(t^*+i)}\} \cap \{Y_{(t^*+j)}\ge S_{(t^*+j)}\}) & = \E \left[\pr(Y_{(t^*+i)}\ge S_{(t^*+i)}) \pr(Y_{(t^*+j)}\ge \tilde{S}_{i,j}+S_{(t^*+i)})\right],
\end{align*}
where $\tilde{S}_{i,j}~\sim S_{(t^*+j)}-S_{(t^*+i)}$ (which is the sum of independent geometric random variables) which is independent of $S_{(t^*+i)}$. Similarly,
\begin{align*}
    & \pr(\{Y_{(t^*+i)}\ge S_{(t^*+i)}\} \cap \{T> k\}) = \\
    & \E\left[ \pr(Y_{(t^*+i)}\ge S_{(t^*+i)}) \pr\left(\bigcup_{t\geq t^*+k} \{Y_{(t)} \geq \tilde{S}_{(i,t)} + S_{(t^*+i)}\}\right) \right],
\end{align*}
where $\tilde{S}_{(i,t)}$ is a partial sum of geometric random variables that are independent of $S_{(t^*+i)}$. 

We will specifically consider the cases $k = 2$ and $k = 3$ to illustrate this iterative conditioning for the lower and upper bounds, respectively.
These are to highlight the structure of our dynamics that arises from cycle to cycle.

Consider the case when $k = 2$.
We have:
\begin{align*}
    \pr(T > t^*) & \geq \sum_{t \geq t^*} \E h(S_{(t)}) - \sum_{t_1, t_2 \geq t^*}\E [h(S_{(t_1)})h(S_{(t_2)})] \\
    & \geq C_4\left(\E z_*^{(X_{(1)}}\right)^{t^*} - C_2^2\sum_{t_1 > t_2 \geq t^*} \E z_*^{2S_{(t_2)} + \tilde{S}_{(t_1, t_2)}} \\
     &= C_4\left(\E z_*^{X_{(1)}}\right)^{t^*} - C_2^2 \sum_{t > t^*}\E z_*^{2S_{(t)}}\sum_{i=0}^{\infty}{\left(\E z_*^{X_{(1)}}\right)^i} \\ 
     & \geq C_4\left(\E z_*^{X_{(1)}}\right)^{t^*} - C_5\left(\E z_*^{2X_{(1)}}\right)^{t^*}.
\end{align*}
Here, the last constant $C_5$ is a result of the fact that $\E z^{X_{(1)}}=\frac{1-j_0}{1-j_0\cdot z}$.
Obviously the leading order term here is $C_4\left(\E z_*^{X_{(1)}}\right)^{t^*}$, which aligns (as previously shown) with the upper bound of $C_3\left(\E z_*^{X_{(1)}}\right)^{t^*}$.

Next, consider the case when $k = 3$.
We have:
\begin{align*}
    \pr(T > t^*) & \leq \sum_{t \geq t^*} \E h(S_{(t)}) - \sum_{t_1, t_2 \geq t^*}\E [h(S_{(t_1)})h(S_{(t_2)})]  + \sum_{t_1, t_2, t_3 \geq t^*}\E [h(S_{(t_1)})h(S_{(t_2)})h(S_{(t_3)})]\\
    & \leq C_3\left(\E z_*^{(X_{(1)}}\right)^{t^*} - C_2^2\sum_{t_1 > t_2 \geq t^*} \E z_*^{2S_{(t_2)} + \tilde{S}_{(t_1, t_2)}} + C_1^2\sum_{t_1 > t_2 > t_3 \geq t^*} \E z_*^{3S_{(t_3)} + \tilde{S}_{(t_1, t_2)} + \tilde{S}_{(t_2, t_3)}}\\
     &= C_3\left(\E z_*^{X_{(1)}}\right)^{t^*} - C_2^2 \sum_{t > t^*}\E z_*^{2S_{(t)}}\sum_{i=0}^{\infty}{\left(\E z_*^{X_{(1)}}\right)^i} +  C_1^2 \sum_{t > t^*}z_*^{2S_{(t)}}\sum_{i=0}^{\infty}{\left(\E z_*^{X_{(1)}}\right)^i}\sum_{j=0}^{\infty}{\left(\E z_*^{X_{(1)}}\right)^j}\\ 
     & \leq C_3\left(\E z_*^{X_{(1)}}\right)^{t^*} - C_6\left(\E z_*^{2X_{(1)}}\right)^{t^*} + C_7\left(\E z_*^{3X_{(1)}}\right)^{t^*},
\end{align*}
where once again the last constant $C_7$ is a result of the fact that $\E z^{X_{(1)}}=\frac{1-j_0}{1-j_0\cdot z}$.
Once again, the leading order term is $C_3\left(\E z_*^{X_{(1)}}\right)^{t^*}$.


\section{Future Work}
In this paper, we compute the tail probability of the time to consensus distribution for a Nakamoto blockchain with a worst-case adversary in a certain sub-sampled time scale.
Ours is the first work to do this in the presence of a non-trivial network delay, which is important for emerging applications of blockchain technology.
There are three pertinent directions for future work, which both require new technical development:
\begin{enumerate}
    \item What is the most likely path to a large time to consensus?
    \item How do we translate from the time scale in our analysis to the original time scale?
    \item How do we extend the results to the more general setting where more than 1 block may arrive per time step?
\end{enumerate}

The first of these directions is a standard large-deviations analysis, although it is complicated by the $Y_{(n)}$ term in the process $S_{(n)} - Y_{(n)}$, which is non-standard.
Traditional large deviations analyses of queues take place \emph{within} a single cycle of a queue; our problem is instead \emph{across} several cycles of a queue.
Understanding the types of events that lead to a large time to consensus will lead to fundamental guidelines for the development of blockchain operating principles.

The second direction is equally important.
Indeed, knowing that the time to consensus required $k$ cycles of the queue imposes a dependence between the first $k$ cycles of the queue and all cycles thereafter.
Therefore, even in expectation, one cannot simply obtain the expected time to consensus via Wald's identity.
This is despite the fact that the cycle-length distribution (or, at least its transform) is readily computable.

The last direction is important for developing new abstractions for blockchains at a less granular timescale, so that other problems such as security could be assessed in the presence of network delay.
We note that last passage for non-skip-free random walks is a notoriously difficult problem; even for our skip-free general model, it is difficult to obtain exact results.

\bibliographystyle{plain}
\bibliography{references}
\end{document}